\numberwithin{equation}{section}
\newtheorem{theorem}{Theorem}[section]
\newtheorem{corollary}[theorem]{Corollary}
\newtheorem{lemma}[theorem]{Lemma}
\theoremstyle{definition}
\newtheorem{definition}[theorem]{Definition}
\newtheorem{remark}[theorem]{Remark}
\newcommand{\ind}{1\hspace{-2.1mm}{1}} %Indicator Function
\newcommand{\RR}{\mathbb{R}}
\newcommand{\QQ}{\mathbb{Q}}
\newcommand{\EE}{\mathbb{E}}
\newcommand{\Oo}{\mathcal{O}}
\newcommand{\Pp}{\mathcal{P}}
\newcommand{\Ww}{\mathcal{W}}
\newcommand{\D}{\mathrm{d}}
\newcommand{\E}{\mathrm{e}}
\newcommand{\I}{\mathrm{i}}
\newcommand{\BS}{\mathrm{BS}}
\newcommand{\eps}{\varepsilon}
\newcommand{\SBar}{\overline{S}}
\newcommand{\ts}{S}
\newcommand{\TT}{[0,T]}
\newcommand{\pf}{\mathfrak{p}}
\newcommand{\qfe}{\qf_{\eps}}
\newcommand{\qf}{\mathfrak{q}}
\newcommand{\df}{\mathfrak{d}}
\newcommand{\ff}{\mathfrak{f}}
\newcommand{\gf}{\ff^{\leftarrow}}
\newcommand{\hf}{\mathfrak{h}}
\newcommand{\Vf}{\mathfrak{V}}
\newcommand{\dfs}{\df(x,\sigma)}
\newcommand{\Cbs}{\mathtt{C}_{\BS}}
\newcommand{\Pbs}{\mathtt{P}_{\BS}}
\newcommand{\Cf}{\mathtt{C}}
\newcommand{\Pf}{\mathtt{P}}
\newcommand{\Pft}{\Pf}
\newcommand{\Iv}{\mathrm{I}}
\title{The Log Moment formula for implied volatility}
\author{Vimal Raval}
\email{v.raval@gmail.com}
\author{Antoine Jacquier}
\address{Department of Mathematics, Imperial College London and the Alan Turing Institute}
\email{a.jacquier@imperial.ac.uk}
\date{\today}
\keywords{moment formula, implied volatility, variance swaps}
\subjclass[2010]{91B25, 60H30}
\thanks{The authors would like to thank Masaaki Fukasawa for insightful comments, in particular leading to Section~\ref{sec:FukExtension}}
\begin{document}
\maketitle
\begin{center}
\textit{Dedicated to the memory of Mark H.A Davis}
\end{center}

\begin{abstract}
We revisit the foundational Moment Formula proved by Roger Lee fifteen years ago.
We show that when the underlying stock price martingale admits finite log-moments $\EE\left[|\log S_t|^q\right]$ for some positive $q$,
the arbitrage-free growth in the left wing of the implied volatility smile is less constrained than Lee's bound. The result 
is rationalised by a market trading discretely monitored variance swaps wherein the payoff is a function of squared log-returns, 
and requires no assumption for the underlying martingale to admit any negative moment. In this respect, the result can derived from a model-independent setup.
As a byproduct, we relax the moment assumptions on the stock price
to provide a new proof of the notorious Gatheral-Fukasawa formula expressing variance swaps in terms of the implied volatility.
\end{abstract}

%%%%%%%%%%%%%%%%%%%%%%%%%%%%%%%%%%%%%%%%%%%%%%%
\section{Introduction}
Implied volatility is at the very core of Quantitative Finance
and is the day-to-day gadget that traders observe and manipulate.
The increasing complexity of stochastic models we have witnessed over the past thirty years
is a testimony to its importance and subtlety.
One key issue is the absence of closed-form expression for the latter, 
leaving it to the sometimes capricious moods of numerical analysis.
Among the plethora of research in this direction, carried out both by academics and by practitioners,
model-free results, with minimum assumptions, are scarce.
Roger Lee's Moment Formula~\cite{Lee} was a groundbreaking result 
and its importance cannot be understated:
it provides a direct link between the slope of the smile in the wings and the moments of the distribution of the underlying asset price.
It serves not only to infer directly observed information about the implied volatility smile into constraints on 
model parameters but also to provide arbitrage-free solutions to the extrapolation problem
(how to evaluate options for strikes outside the observed range).
Recent refinements have led to a deeper understanding 
of the information contained in the implied volatility smile, 
determining whether the probability of default of the underlying could be inferred~\cite{DeMarco} or 
the potential lack of martingality of the latter~\cite{JKR}.
These have complemented the otherwise exhaustive literature on the asymptotic behaviour of 
stochastic models in Finance, a thorough review of which can be consulted in~\cite{AsymptBook}.

Asymptotic methods have both supporters and enemies, 
the former trying to expand the abundance of techniques to every possible model,
while the latter sometimes dismiss the usefulness of these results.
The truth as often lies somewhere in the middle but asymptotic results nevertheless provide useful information 
about the qualities and pitfalls of models with regards to real-life practices. 
With this in mind, we revisit Lee's formula when presented with 
some underlying stock price, the prices of finitely many co-maturing European Call and Put options as well as 
a variance swap with the same maturity.  
In ~\cite{dh05} conditions were stated under which a given set of European Call and Put prices all maturing at the same time~$T$ is consistent with absence of arbitrage,
which is shown to be equivalent to the existence of a \textit{market model}; a filtered proability space carrying an adapted, integrable process $(S_t)_{t\in[0,T]}$, with $S_0$ equal to the time-0 stock price,
in which the discounted stock price process is a martingale and the discounted expectations of the Put option payoffs recover the observed time-zero values. 
In~\cite{DOR14} robust model-free conditions are provided, when the process $(S_t)_{t\in[0,T]}$ admits continuous sample paths, for a set of European Put option prices and 
 \textit{continuously monitored} variance swap price to be consistent with absence of arbitrage. 
Our approach here is to make no assumption on the dynamics of the stock-price price process, and to instead infer
limiting behaviour of the left-wing given merely the information that the marginal distribution of $S_T$ admits finite log moments, $\EE\left[|\log S_T|^q\right]$ for finite positive moments~$q$,
which is motivated directly from the market since it trades a (discretely or continuously monitored) variance swap.
Further, it is feasible for market models to exist that do not admit any negative moment for the stock price and 
Lee's Moment formula (on~$S_T$ rather than $\log(S_T)$) implies that the left wing of the smile has slope precisely 
equal to two.
Our newly formulated Log-Moment formula allows us to provide higher-order term in this asymptotic behaviour,
fully characterised by the moments of the log-stock price.

We provide a precise formulation of the problem and a thorough review of Roger Lee's 
Moment formula in Section~\ref{sec:problemformulation} 
before stating and proving the new Log Moment formula in Section~\ref{sec:results}.
As a byproduct, we revisit the Fukasawa-Gatheral formula expressing variance swaps
in terms of the implied volatility and provide a new proof with relaxed assumptions
and further show how this improves Fukasawa's representation~\cite{Fukasawa} of option prices in terms of implied volatility.
We highlight in Section~\ref{sec:applications} a few stochastic models, 
both with continuous paths and with jumps, used in Finance for which our 
formula refines Lee's standards.

%%%%%%%%%%%%%%%%%%%%%%%%%%%%%%%%%%%%%%%%%%%%%%%%%%%%%%%
%%%%%%%%%%%%%%%%%%%%%%%%%%%%%%%%%%%%%%%%%%%%%%%%%%%%%%%
\section{Problem Formulation and background}\label{sec:problemformulation}
We consider a time horizon $\TT$, with $T>0$ and a filtered probability space $\left(\Omega, \mathcal{F}, (\mathcal{F})_{t \in\TT}, \QQ \right)$ satisfying the usual assumptions and carrying two adapted processes $(\SBar_t)_{t \in \TT}$, the asset price, 
and $(B_t)_{t \in \TT}$, starting from $B_0=1$, where~$B_T$ represents the value at time~$T$ of~\pounds$1$ 
invested at time~$0$ in the money-market account.
We further denote by~$F_t$ the $t$-forward price of~$\SBar$ from time~$0$, 
thus $F_0=\SBar_0$ is the observed spot price.  
Dividends may be paid by the asset~$\SBar$, but we do not make any assumptions about these. 
The process $(\SBar_t)_{t \in \TT}$ is assumed to be a strictly positive $\QQ$-semimartingale.
We finally assume the existence of a zero-coupon-bond maturing at~$T$ with face-value \pounds $1$, traded with price
$P_T = \EE[B_T^{-1}]$ so that 
$F_T = \EE[\SBar_T]$,
where all expectations are taken under~$\QQ$.
We consider the setup where
$(\SBar_t)_{t\in\TT}$ can be traded with no transaction costs
and where the interest rate for borrowing and lending is the same, however not necessarily deterministic.
Here, the probability~$\QQ$ thus plays the role of the $T$-forward measure.
By no-arbitrage arguments, a vanilla Put option with strike~$K$ is worth
$\Pf_0(K) := P_T\EE[(K -\SBar_T)^+]$, for $K\geq 0$.
Using normalised units for the stock-price $\ts_T := \SBar_T/F_T$ and log-moneyness $x := \log(K/F_T)$, 
the normalised price of the Put option is denoted by
$$
\Pft(x) : = \frac{\Pf_0(K)}{P_T F_T} = \EE\left[\left(\E^x - \ts_T\right)^+\right].
$$
Recall now the Black-Scholes formula for the European Put option:
\begin{equation}\label{eq:Put} 
\Pbs(x, \sigma) = \E^x \Phi[-\dfs] - \Phi[-\dfs - \sigma],
\end{equation}
where~$\Phi$ denotes the Gaussian cumulative distribution function and 
\begin{equation}\label{eq:d}
\dfs := -\frac{x}{\sigma} - \frac{\sigma}{2},
\end{equation}
which is nothing else than the usual $d_2$ or $d_-$ from the Black-Scholes formula.
Since the maturity~$T$ is fixed throughout the whole paper, 
we work with normalised volatility~$\sigma$ rather than the classical $\sigma\sqrt{T}$ notation.
This has the clear advantage of avoiding cluttered statements.
\begin{definition}\label{def:iv}
For any log-moneyness $x \in \RR$, the implied volatility $\Iv(x) \in [0,\infty)$ is the unique non-negative solution to
$\Pft(x) = \Pbs(x, \Iv(x))$.
\end{definition}

The implied volatility~$\Iv(x)$ is well defined whenever 
$\Pft(x) \in [(\E^x-1)^+, \E^x]$, which holds since the stock price is a true martingale.
Our starting point is the following initial bound for the implied volatility~\cite[Lemma 3.3]{Lee}:

\begin{lemma}\label{applem:leebound}
For any $\beta>2$, there exists $x^*<0$ such that 
$\Iv(x) < \sqrt{\beta |x|}$ for all $x < x^*$.
 For $\beta=2$, the same holds if and only if $\QQ[\ts_T=0]<\frac{1}{2}$.
\end{lemma}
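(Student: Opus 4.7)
My plan is to exploit the strict monotonicity of $\sigma \mapsto \Pbs(x,\sigma)$ on $(0,\infty)$: by Definition~\ref{def:iv}, establishing $\Pft(x) < \Pbs(x, \sqrt{\beta|x|})$ for $x$ below some threshold $x^* < 0$ is equivalent to $\Iv(x) < \sqrt{\beta|x|}$ on the same range. The strategy is therefore to compute the asymptotic behaviours of $\Pft(x)/\E^x$ and $\Pbs(x, \sqrt{\beta|x|})/\E^x$ separately as $x \to -\infty$, then compare.

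For the Black-Scholes side, substituting $\sigma = \sqrt{\beta|x|}$ into~\eqref{eq:d} gives $\dfs = \frac{2-\beta}{2\sqrt{\beta}}\sqrt{|x|}$ and $\dfs+\sigma = \frac{\beta+2}{2\sqrt{\beta}}\sqrt{|x|}$. For $\beta>2$, $-\dfs \to +\infty$ so $\Phi[-\dfs]\to 1$, while the Mills-ratio estimate $\Phi[-y]\sim\phi(y)/y$ yields $\E^{-x}\Phi[-\dfs-\sigma] = O\bigl(\E^{-(\beta-2)^2|x|/(8\beta)}/\sqrt{|x|}\bigr)\to 0$. For $\beta=2$ these become $\Phi[-\dfs]=1/2$ and $\E^{-x}\Phi[-\dfs-\sigma]=O(1/\sqrt{|x|})\to 0$, so
\[
\lim_{x\to-\infty}\frac{\Pbs(x,\sqrt{\beta|x|})}{\E^x} = \begin{cases} 1, & \beta>2,\\ 1/2, & \beta=2.\end{cases}
\]

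For the market side, I would write $\Pft(x) = \E^x\QQ[\ts_T\le\E^x] - \EE[\ts_T\ind_{\{\ts_T\le\E^x\}}]$ and send $x\to-\infty$: the first term tends to $\QQ[\ts_T=0]$ by right-continuity of the law of $\ts_T$ at $0$, while the second divided by $\E^x$ is bounded above by $\QQ[0<\ts_T\le\E^x]\to 0$. Hence $\Pft(x)/\E^x \to \QQ[\ts_T=0]$. I would also retain the uniform lower bound $\Pft(x)\ge\E^x\QQ[\ts_T=0]$, valid for every $x\in\RR$, for use in the ``only if'' direction below.

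Combining the two steps finishes the argument. For $\beta>2$, the two limits differ by $1-\QQ[\ts_T=0]>0$, strict because the martingale normalisation $\EE[\ts_T]=1$ forces $\QQ[\ts_T>0]>0$, giving $\Pft(x)<\Pbs(x,\sqrt{\beta|x|})$ for $x<x^*$. For $\beta=2$, the ``if'' direction proceeds identically via the asymptotic gap $1/2-\QQ[\ts_T=0]>0$; the ``only if'' direction uses the uniform lower bound, since $\QQ[\ts_T=0]\ge 1/2$ would entail $\Pft(x)\ge\E^x/2>\Pbs(x,\sqrt{2|x|})$ and therefore $\Iv(x)>\sqrt{2|x|}$ for every $x<0$. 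The main technical obstacle will be the Mills-ratio computation in the first step; everything else reduces to dominated convergence and the strict separation of the two limits.
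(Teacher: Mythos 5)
The paper does not reprove this lemma: it is stated with a citation to Lee~\cite[Lemma~3.3]{Lee} and no proof is given, so there is nothing in the paper to compare against directly. Your argument is correct and essentially reconstructs Lee's own proof. The reduction via monotonicity of $\sigma\mapsto\Pbs(x,\sigma)$, the computation $\dfs=\tfrac{2-\beta}{2\sqrt{\beta}}\sqrt{|x|}$ and $\dfs+\sigma=\tfrac{\beta+2}{2\sqrt{\beta}}\sqrt{|x|}$, the Mills-ratio estimate showing $\E^{-x}\Phi[-\dfs-\sigma]=O(\E^{-(\beta-2)^2|x|/(8\beta)}/\sqrt{|x|})$, and the decomposition $\Pft(x)/\E^x=\QQ[\ts_T\le\E^x]-\E^{-x}\EE[\ts_T\ind_{\{\ts_T\le\E^x\}}]\to\QQ[\ts_T=0]$ are all sound. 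The $\beta=2$ bidirectional argument is also handled correctly, including the strict inequality $\Pbs(x,\sqrt{2|x|})=\tfrac12\E^x-\Phi[-\sqrt{2|x|}]<\tfrac12\E^x$ that drives the ``only if'' direction. One small remark worth adding: in the setting of this particular paper, $\ts_T$ is assumed strictly positive almost surely, so $\QQ[\ts_T=0]=0$ and the $\beta=2$ bound always holds; the lemma is stated in its more general form because it is quoted from Lee, where a mass at the origin is permitted.
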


In our setup, $\ts_T$ is strictly positive almost surely and therefore
 $\Iv(x)= \Oo(\sqrt{2|x|})$ as~$x$ tends to $-\infty$.
When the number of finite inverse power moments for the stock price is known, the small-strike
Moment Formula due to Lee~\cite{Lee} refines the above result:

\begin{theorem}[Lee's Left Moment Formula] \label{thm:Lee}
Let
$$
\pf:= \sup\left\{ p>0: \EE\left[\ts_T^{-p}\right]< \infty \right\}
  \quad \text{and} \quad \beta_L:= \limsup_{x \downarrow-\infty} \frac{\Iv(x)^2}{|x|}.
$$
Then $\beta_L \in [0,2]$, $ \pf= \frac{1}{2 \beta_L} + \frac{\beta_L}{8} - \frac{1}{2}$,
with $\frac{1}{0}:=\infty$. 
Equivalently,
$ \beta_L = 2 - 4(\sqrt{\pf^2 + \pf} - \pf)$, 
equal to~$0$ for $\pf=\infty$.
\end{theorem}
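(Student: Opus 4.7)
The strategy is to translate the exponential decay rate of $\Pft(x)$ as $x\to-\infty$ into precise information on the negative moments of $\ts_T$, and separately to compute the Black--Scholes Put asymptotic along the curve $\sigma^2=\beta|x|$. The range $\beta_L\in[0,2]$ is immediate from Lemma~\ref{applem:leebound}, so we focus on the equality.

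The first ingredient is a Fubini identity obtained by interchanging integration with the expectation in $\Pft(x)=\EE[(\E^x-\ts_T)^+]$:
$$\int_{-\infty}^0 \E^{-(p+1)x}\Pft(x)\,\D x = \frac{1}{p(p+1)}\EE\bigl[\ts_T^{-p}\ind_{\{\ts_T<1\}}\bigr] + \frac{1}{p+1}\EE\bigl[\ts_T\ind_{\{\ts_T<1\}}\bigr] - \frac{1}{p}\QQ[\ts_T<1],$$
valid for any $p>0$. Since the last two terms are bounded, $\EE[\ts_T^{-p}]<\infty$ if and only if the integral on the left is finite, reducing the problem to controlling the exponential decay of $\Pft$ at $-\infty$. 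The second ingredient is the classical Black--Scholes asymptotic: with $\sigma_x:=\sqrt{\beta|x|}$ for $\beta\in(0,2)$, inserting $\df(x,\sigma_x)=\sqrt{|x|}\bigl(1/\sqrt{\beta}-\sqrt{\beta}/2\bigr)$ into~\eqref{eq:Put} and applying Mills' ratio to both $\Phi(-\df)$ and $\Phi(-\df-\sigma)$ yields
$$\Pbs(x,\sigma_x) = \frac{C_\beta}{\sqrt{|x|}}\,\E^{-|x|(2+\beta)^2/(8\beta)}\bigl(1+o(1)\bigr), \qquad x\to-\infty,$$
for an explicit constant $C_\beta>0$. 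An algebraic check confirms $(2+\beta)^2/(8\beta)=1+p(\beta)$ with $p(\beta):=\frac{1}{2\beta}+\frac{\beta}{8}-\frac{1}{2}$, pinpointing the critical decay exponent.

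For the inequality $\pf\ge p(\beta_L)$, fix any $\beta\in(\beta_L,2)$; by definition of $\limsup$, $\Iv(x)\le\sigma_x$ for $x$ sufficiently negative, and monotonicity of $\Pbs(x,\cdot)$ gives $\Pft(x)\le\Pbs(x,\sigma_x)$. Substituting into the Fubini identity yields $\EE[\ts_T^{-p}]<\infty$ for every $p<p(\beta)$, and letting $\beta\downarrow\beta_L$ (using continuity of $p$) gives $\pf\ge p(\beta_L)$. For the reverse inequality $\pf\le p(\beta_L)$, fix $\beta\in(0,\beta_L)$; by definition of $\limsup$, there is a sequence $x_n\downarrow-\infty$ with $\Iv(x_n)\ge\sqrt{\beta|x_n|}$, hence $\Pft(x_n)\ge\Pbs(x_n,\sigma_{x_n})$. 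Crucially, since $x\mapsto\Pft(x)$ is non-decreasing, the pointwise lower bound propagates to the whole interval $[x_n,x_n+1]$; integrating against $\E^{-(p+1)x}$ and summing across $n$ shows $\int_{-\infty}^0\E^{-(p+1)x}\Pft(x)\,\D x=\infty$ whenever $p>p(\beta)$, hence $\EE[\ts_T^{-p}]=\infty$. Letting $\beta\uparrow\beta_L$ concludes.

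The principal obstacle is the reverse direction: only a subsequential lower bound on $\Iv$ is available from the $\limsup$ definition, and bridging this to divergence of the moment integral relies on the monotonicity of $x\mapsto\Pft(x)$ to transfer the pointwise inequality to intervals of positive length. The boundary cases $\beta_L=0$ (where the formula gives $\pf=\infty$) and $\beta_L=2$ (where $\pf=0$) require mild separate treatment but follow from the same scheme.
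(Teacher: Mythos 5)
The paper does not prove Theorem~\ref{thm:Lee}; it is quoted directly from Lee~\cite{Lee} as background, so there is no in-paper proof to compare against. Your reconstruction is correct and follows Lee's original argument in all its essentials. The Fubini identity you use is precisely the Carr--Madan replication of the convex payoff $s\mapsto s^{-p}$ (i.e.\ Lemma~\ref{lem:Replic} applied with $f(s)=s^{-p}$, $x_0=1$, after the change of variable $K=\E^x$), which is the same device Lee uses to pass between negative moments and Put prices. The Black--Scholes asymptotic $\Pbs(x,\sqrt{\beta|x|})\sim C_\beta|x|^{-1/2}\E^{-|x|(2+\beta)^2/(8\beta)}$ via Mills' ratio, and the algebraic identity $(2+\beta)^2/(8\beta)=1+p(\beta)$, are exactly Lee's Lemmas~3.2--3.5. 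The one place where your bookkeeping differs slightly is the lower-bound direction $\pf\le p(\beta_L)$: Lee argues via the contrapositive (a pointwise Markov-type bound $\Pft(\E^x)\le c_p\E^{(p+1)x}$ coupled with Black--Scholes inversion to produce an eventual upper bound on $\Iv$), whereas you work directly with the replication integral and exploit the monotonicity of $x\mapsto\Pft(x)$ to spread the subsequential lower bound $\Pft(x_n)\ge\Pbs(x_n,\sigma_{x_n})$ across intervals $[x_n,x_n+1]$, forcing divergence. That monotonicity step is a clean way to handle the $\limsup$; it does require passing to a subsequence so that the intervals are disjoint, which you should say explicitly but which poses no difficulty since $x_n\downarrow-\infty$. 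The boundary cases are handled as you indicate: $\beta_L=0$ follows by letting $\beta\downarrow 0$ in $\pf\ge p(\beta)$ (so $\pf=\infty$), and $\beta_L=2$ gives $\pf\le p(2)=0$ directly. Overall this is a correct and faithful proof, differing from Lee's only in presentation (a single replication identity used for both directions, rather than a pointwise bound for one direction and the replication for the other).
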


This theorem was one of the first model-free result about the relationship between 
the distribution of the stock price and the behaviour of the implied volatility. 
The $\limsup$ in Lee's result was further strengthened to a genuine limit by Benaim and Friz~\cite{BenaimFriz1, BenaimFriz2}, 
albeit with additional assumptions.
It is really a cornerstone in the implied volatility modelling literature
and has provided academics and practitioners robust consistency checks for extrapolation of the smile.
Lee also proved a symmetric right-wing formula, but we omit its presentation as we shall not require it here.
This left-wing behaviour of the smile left two unresolved issues however: 
if $\ts_T$ has a strictly positive mass at the origin, then Lee's expression is not able to distinguish it from a mass-less distribution with fat tails; this was tackled in~\cite{DeMarco}.
The second issue is that in fact no information about the moments of~$\ts_T$ is really available in the market,
and the so-called Power options~\cite{CarrLeeRobust} are rarely traded.
However, variance swaps are traded on the market and it is thus a natural question
to check if Lee's celebrated Moment Formula could be refined to take into account these highly liquid derivatives.

%%%%%%%%%%%%%%%%%%%%%%%%%%%%%%%%%%%%%%%%%%%%%%%%%%%%%%%%%
%%%%%%%%%%%%%%%%%%%%%%%%%%%%%%%%%%%%%%%%%%%%%%%%%%%%%%%%%
\section{Variance swaps and the Log-Moment formula}\label{sec:results}
%%%%%%%%%%%%%%%%%%%%%%%%%%%%%%%%%%%%%%%%%%%%%%%%%%%%%%%%%
\subsection{Characterisation of variance swaps}
Variance swaps are highly liquid traded derivatives on the Equity market. One can describe them as a standard swap, where, over the time horizon~$\TT$ the floating leg is equal to the (annualised) realised variance~$\Vf^d_T$ as measured by
$$
\Vf^d_T := \frac{252}{T}\sum_{i=1}^{n}\log\left(\frac{S_{t_i}}{S_{t_{i-1}}}\right)^2,
$$
in which $0=t_0<t_1<\cdots<t_n=T$ is an equidistant partition with $t_i-t_{i-1} = iT/n$ corresponding to one day for some positive integer~$n$. 
The superscript~$^d$ here refers to the fact that this definition corresponds to the so-called discretely monitored 
variance swap.
The early advances on the hedging and pricing of the variance swap by Neuberger~\cite{neuberger_90}, Dupire~\cite{dupire93},  Carr and Madan~\cite{cm02} and Demeterfi, Derman, Kamal and Zou~\cite{Derman}
led to the instrument being used extensively by traders to express views on future realised variance and hedging volatility risk. These advances hinged on assuming (i) the stock price process is a continuous semi-martingale with strictly positive values, (ii) the 
realised variance is \textit{continuously monitored} and measured by the quadratic variation $\langle \log S\rangle_T$, and (iii) Call or Put options maturing at time $T$ are traded for all strikes $K  \in \RR_+$. It\^o's formula for continuous semi-martingales applied to
$-\log S_T$, then gives
$$
 \langle \log S\rangle_T
 = \int_0^T \frac{\D \langle S\rangle_t}{S_t^2}
 = -2 \log\left(\frac{S_T}{S_0}\right) + 2 \int_0^T \frac{\D S_t}{S_t}.
$$
The variance swap is replicated by holding a contract paying 
$-\log(S_T/S_0)$ and dynamic trading in the underlying stock.  
Now, the log payoff $-\log S_T$ is redundant, since
$$ -\log\left(\frac{S_T}{S_0}\right)
 = \frac{S_T - S_0}{S_0} + \int_0^{S_0}\frac{(K - S_T)^+}{K^2} \D K + \int_{S_0}^{\infty}\frac{(S_T - K)^+}{K^2} \D K,$$
i.e.\ it is hedged by a static position in the underlying asset, 
the continuum of Call and Put options, and cash.  
The variance swap payoff in this setup is therefore fully replicated, with no assumptions on the dynamics of
the price process~$S$, except for continuity.  
It thus follows that the variance swap-rate is the forward cost of the full
hedging portfolio.  
When interest rates are zero and dividends are not paid by the underlying asset, this is
$$ 2\int_0^{S_0}\frac{\Pf_0(K)}{K^2} \D K + 2\int_{S_0}^{\infty}\frac{\Cf_0(K)}{K^2} \D
K,$$
provided both integrals are finite, with~$\Pf_0(K)$ and~$\Cf_0(K)$ the prices of Put and Call options with strike~$K$.  
The subtle impact of jumps on the prices of variance swaps was treated thoroughly by Broadie and Jain~\cite{Broadie}.
In both the discretly monitored and the continuously monitored case, 
the moments of the underlying stock price are not at play, but rather the moments of its logarithm,
thus creating the need to refine Lee's formula to this case.

%%%%%%%%%%%%%%%%%%%%%%%%%%%%%%%%%%%%%%%%%%%%%%%%%%%%%%%%%
\subsection{The Log-Moment formula}
Our main result is the following Log-Moment Formula:
\begin{theorem}\label{thm:LMFormula}
Let
$\qf := \sup\left\{q\geq 0: \EE\left[\left|\log \ts_T\right|^q\right] < \infty\right\}$ be finite.
Then
$$
\liminf_{x \downarrow-\infty}\frac{\df(x, \Iv(x))}{\sqrt{2 \log|x|}} = \sqrt{\qf}.
$$
\end{theorem}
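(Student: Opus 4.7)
The plan is to sandwich the identity $\Pbs(x,\Iv(x))=\Pft(x)$ between two-sided Mills-ratio asymptotics for the Black--Scholes put and Markov-type moment inequalities for $\Pft(x)$, converting the decay of the left tail $\QQ[\log\ts_T<x]$ as $x\downarrow-\infty$ into precise information on $\df(x,\Iv(x))$. Since $\ts_T$ is a nonnegative martingale with $\EE[\ts_T]=1$, the positive part $(\log\ts_T)_+$ has moments of all orders (because $(\log y)^q=o(y)$), so $\qf$ is governed entirely by the left tail. On the option side, splitting the expectation at $\ts_T=\E^x/2$ yields the sandwich
\begin{equation*}
\tfrac{1}{2}\E^x\,\QQ\!\left[\log\ts_T<x-\log 2\right]\;\leq\;\Pft(x)\;\leq\;\E^x\,\QQ[\log\ts_T<x],
\end{equation*}
and Markov's inequality gives $\Pft(x)\leq C_q\,\E^x|x|^{-q}$ for every $q<\qf$.

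For the Black--Scholes side I invoke the classical Mills bounds $\frac{y}{y^2+1}\phi(y)\leq\Phi(-y)\leq\frac{\phi(y)}{y}$ for $y>0$, together with the identity $\E^x\phi(\df)=\phi(\df+\sigma)$ (which follows from $(\df+\sigma)^2=\df^2-2x$). Writing $d_1:=\df+\sigma$ and assuming $\df,d_1>0$, this produces
\begin{equation*}
\phi(d_1)\!\left(\frac{\df}{\df^2+1}-\frac{1}{d_1}\right)\;\leq\;\Pbs(x,\sigma)\;\leq\;\frac{\phi(d_1)}{\df},
\end{equation*}
and substituting $\phi(d_1)=(2\pi)^{-1/2}\E^x\E^{-\df^2/2}$ shows, to logarithmic precision, that $\Pft(x)/\E^x$ is governed by $\E^{-\df^2/2}/\df$. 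Equating with the Markov bound produces, for any $q<\qf$, the inequality $\E^{-\df^2/2}\sigma/(d_1\df)\lesssim |x|^{-q}$; a bootstrap confirms that in the resulting regime $\sigma\sim\sqrt{2|x|}$ so $\sigma/d_1\to 1$ and $\log(d_1\df/\sigma)=o(\log|x|)$, whence $\df^2\geq 2q\log|x|\,(1+o(1))$ and $q\uparrow\qf$ delivers $\liminf\df/\sqrt{2\log|x|}\geq\sqrt{\qf}$. For the reverse inequality I assume for contradiction that $\df\geq\sqrt{2(\qf+\eps)\log|x|}$ eventually; the upper Mills bound then yields $\Pft(x)\leq \E^x/(\df\,|x|^{\qf+\eps})$, which combined with the lower sandwich on $\QQ[\log\ts_T<x-\log 2]$ and the layer-cake integral $\int y^{p-1}\QQ[\log\ts_T<-y]\D y$ for some $p\in(\qf,\qf+\eps)$ shows $\EE[|\log\ts_T|^p]<\infty$, contradicting the defining supremum of $\qf$.

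The main obstacle is the preliminary step of verifying $\df(x,\Iv(x))\to+\infty$ as $x\downarrow-\infty$: without this, the Mills bounds above (which presume $\df,d_1>0$) are unavailable, and Lemma~\ref{applem:leebound} alone permits $\df$ to be non-positive along subsequences. I would handle this by combining $\Pft(x)/\E^x\to 0$ (dominated convergence, since $\ts_T>0$ a.s.) with a case analysis on the relation $\Phi(-\df)-\E^{-x}\Phi(-d_1)=\Pft(x)/\E^x$: if $\df$ remained bounded along a subsequence, Lemma~\ref{applem:leebound} would force $\Iv(x)^2/|x|\to 2$, and the resulting $d_1\sim\sqrt{2|x|}$ with the identity $\E^{-x}\phi(d_1)=\phi(\df)$ would drive $\E^{-x}\Phi(-d_1)\to 0$, hence $\Phi(-\df)\to 0$, contradicting the boundedness of $\df$. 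Once $\df\to+\infty$ is secured, the remainder is a careful accounting of $O(\log\log|x|)$ corrections in the Mills asymptotics, which do not perturb the $\sqrt{2\log|x|}$ normalisation.
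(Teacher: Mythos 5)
Your proof is correct but follows a genuinely different route from the paper's. The paper organises the argument around a tangent-line convexity bound for the put (Lemma~\ref{lem:PutBound}), transfers this to an implied-volatility bound via the Gaussian tail asymptotic (Lemma~\ref{lem:IVbound}), and for the reverse inequality deploys the Carr--Madan replication identity (Lemma~\ref{lem:Replic}) applied to the convex function $x\mapsto|\log x|^p\ind_{\{x<z_p\}}+|p-1|^p$, expressing $\EE[|\log\ts_T|^p]$ directly as an integral of Black--Scholes prices against a second-derivative weight. You instead work at the level of left-tail probabilities: the two-sided sandwich $\tfrac12\E^x\QQ[\log\ts_T<x-\log 2]\leq\Pft(x)\leq\E^x\QQ[\log\ts_T<x]$ combined with Markov on one side and the layer-cake representation $\EE[|\log\ts_T|^p\ind_{\{\ts_T<1\}}]=\int_0^\infty p\,y^{p-1}\QQ[\log\ts_T<-y]\,\D y$ on the other, with explicit Mills bounds $\tfrac{y}{y^2+1}\phi(y)\leq\Phi(-y)\leq\phi(y)/y$ mediating between $\Pbs$ and $\df$. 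Your route is more elementary (it avoids the distributional replication lemma and the somewhat delicate verification that the truncated $|\log x|^p$ is convex with the right boundary behaviour) but in exchange you must establish the preliminary fact that $\df(x,\Iv(x))\to+\infty$ as $x\downarrow-\infty$, since the two-sided Mills inequality is only available for positive arguments; the paper sidesteps this because Lemma~\ref{lem:IVbound} compares $\Pbs(x,\Iv(x))$ with $\Pbs$ at a \emph{reference} volatility, never requiring positivity of $\df$ at the actual implied volatility. Your handling of that preliminary step (dominated convergence for $\Pft(x)/\E^x\to 0$, the identity $\E^{-x}\phi(d_1)=\phi(\df)$, and the observation that $d_1\to\infty$ whenever $|x|\to\infty$) is sound. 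Two small points of hygiene: the Markov step $\QQ[\log\ts_T<x]\leq\EE[|\log\ts_T|^q]|x|^{-q}$ recovers exactly the constant of Lemma~\ref{lem:PutBound}, so the two upper bounds coincide; and the phrase ``a bootstrap confirms $\sigma\sim\sqrt{2|x|}$'' should really be a dichotomy --- on any subsequence where $\Iv(x)^2/|x|$ stays bounded away from $2$ one already has $\df\gg\sqrt{\log|x|}$ and the $\liminf$ lower bound is automatic, while on subsequences with $\Iv(x)^2/|x|\to 2$ the Mills analysis applies --- but this is a matter of presentation, not substance.
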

It is clear that~$\qf$ does not provide information about the right tail of the distribution of~$\ts_T$.
Since~$\ts$ is a martingale, its first moment is finite and therefore, for any $q\geq 0$,
there exists some constant $c_q\geq 1$ such that
$$
\EE\left[|\log \ts_T|^q\ind_{\{\ts_T\geq c_q\}}\right]
\leq \EE\left[|\ts_T|\ind_{\{\ts_T\geq c_q\}}\right]
\leq \EE\left[\ts_T\right],
$$ 
which is finite, since $\ts_T$ is strictly positive almost surely.
The following corollary is immediate but shows the immediate consequences of the Log-Moment Formula 
on the behaviour of the implied volatility in the left wing.
\begin{corollary}\label{corL:IVExpansion}
In the setting of Theorem~\ref{thm:LMFormula}, at least along a subsequence, we have, as $x\downarrow-\infty$,
\begin{align*}
\Iv(x)  & = \sqrt{2\qf\log(|x|) - 2x} - \sqrt{2\qf\log(|x|)},\\
 & = \sqrt{2|x|} - \sqrt{2\qf\log|x|} + \frac{\qf\log(|x|)}{\sqrt{2|x|}} + \mathcal{O}\left(|x|^{-3/2}\right),
\end{align*}
\end{corollary}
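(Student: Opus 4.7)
My plan is to invert the explicit definition of $\df$ so as to express $\Iv(x)$ directly in terms of $\df(x,\Iv(x))$, and then insert the asymptotic information on the latter furnished by Theorem~\ref{thm:LMFormula}. Starting from \eqref{eq:d}, multiplying through by $\sigma$ gives the quadratic $\sigma^2 + 2\sigma\,\df(x,\sigma) + 2x = 0$, whose only non-negative root (as $\Iv(x)\geq 0$) is
\[
\sigma = -\df(x,\sigma) + \sqrt{\df(x,\sigma)^2 - 2x}.
\]
Specialising to $\sigma = \Iv(x)$ yields an exact identity valid for every $x<0$, since then $\df(x,\Iv(x))^2 - 2x > 0$ trivially.

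Theorem~\ref{thm:LMFormula} next provides a sequence $x_n\downarrow -\infty$ along which $\df(x_n,\Iv(x_n)) = \sqrt{2\qf\log|x_n|}\,(1+o(1))$. Substituting this into the displayed identity and tidying the $o(1)$ slack produces the first displayed equality of the corollary, interpreted asymptotically along the subsequence. The second line is then a purely algebraic Taylor expansion: writing $\sqrt{2\qf\log|x|-2x} = \sqrt{2|x|}\,\bigl(1 + \qf\log|x|/|x|\bigr)^{1/2}$, legitimate for $x$ sufficiently negative, and applying $(1+\eps)^{1/2} = 1 + \eps/2 + \mathcal{O}(\eps^2)$ with $\eps = \qf\log|x|/|x|\to 0$, one reads off
\[
\sqrt{2\qf\log|x|-2x} = \sqrt{2|x|} + \frac{\qf\log|x|}{\sqrt{2|x|}} + \mathcal{O}\!\left(\frac{(\log|x|)^2}{|x|^{3/2}}\right),
\]
and subtracting $\sqrt{2\qf\log|x|}$ reproduces the stated expansion.

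The only delicate point I see is the propagation of the $o(1)$ slack inherited from Theorem~\ref{thm:LMFormula}: a remainder of order $o(\sqrt{\log|x|})$ in $\df(x_n,\Iv(x_n))$ could in principle dominate the explicit third-order term $\qf\log|x|/\sqrt{2|x|}$, and so requires care when asserting the second expansion. One resolves this by reading the $\mathcal{O}(|x|^{-3/2})$ symbol generously enough to absorb the resulting $o(\sqrt{\log|x|})$ contribution, or by passing to a further subsequence along which sharper convergence of $\df(x_n,\Iv(x_n))/\sqrt{2\qf\log|x_n|}$ to~$1$ holds. Modulo this bookkeeping, the entire corollary is a purely algebraic consequence of the exact inversion identity above together with Theorem~\ref{thm:LMFormula}.
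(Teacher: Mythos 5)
The paper offers no proof of this corollary, dismissing it as ``immediate'' after Theorem~\ref{thm:LMFormula}. Your approach---solving the quadratic $\sigma^2 + 2\sigma\,\df(x,\sigma) + 2x = 0$ for $\sigma = \Iv(x)$ and then substituting the subsequential asymptotics of $\df(x,\Iv(x))$---is exactly the natural route, and the algebraic Taylor step for the second line is correct; in fact you obtain the sharper remainder $\mathcal{O}\bigl((\log|x|)^2/|x|^{3/2}\bigr)$, whereas the corollary writes $\mathcal{O}(|x|^{-3/2})$ without the logarithmic factor.

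The concern you raise in your final paragraph is genuine and, in my view, points to an imprecision in the statement itself rather than a gap in your argument. Theorem~\ref{thm:LMFormula} only gives $\df(x_n,\Iv(x_n)) = \sqrt{2\qf\log|x_n|}\,(1+o(1))$ along a subsequence, so that writing $\df(x_n,\Iv(x_n)) = \sqrt{2\qf\log|x_n|} + \eps_n$ with $\eps_n = o\bigl(\sqrt{\log|x_n|}\bigr)$, the exact inversion yields
\[
\Iv(x_n) = \sqrt{2|x_n|} - \sqrt{2\qf\log|x_n|} - \eps_n + \frac{\qf\log|x_n|}{\sqrt{2|x_n|}} + o\!\left(\frac{\log|x_n|}{\sqrt{|x_n|}}\right).
\]
Since $\sqrt{\log|x_n|} \gg \log|x_n|/\sqrt{|x_n|}$, the slack $\eps_n$ can dominate the displayed third-order term, so the expansion as written is only certifiable up to its first two terms; the third term and the $\mathcal{O}$ remainder are, strictly speaking, the expansion of the explicit expression $\sqrt{2\qf\log|x|-2x} - \sqrt{2\qf\log|x|}$ rather than of $\Iv(x)$ itself. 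Your suggested fix of ``reading $\mathcal{O}(|x|^{-3/2})$ generously'' does not work ($o(\sqrt{\log|x|})$ is not $\mathcal{O}(|x|^{-3/2})$), and the liminf characterisation gives no control over the rate along a further subsequence; but you correctly flagged this, and a reader should understand the corollary as asserting the two-term asymptotic $\Iv(x) = \sqrt{2|x|} - \sqrt{2\qf\log|x|} + o\bigl(\sqrt{\log|x|}\bigr)$ along a subsequence, with the remaining display recording how the right-hand side of the first line itself expands.
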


Benaim and Friz~\cite{BenaimFriz1, BenaimFriz2} refined Lee's result, with additional assumptions, 
from a $\liminf / \limsup$ statement to a genuine limit.
One could investigate how this might apply here, but we defer it to a future analysis 
in order not to clutter our main result with extra technical assumptions.
An interesting feature however is the form of the small-strike implied volatility expansion in Corollary~\ref{corL:IVExpansion}.
The slope equal to~$2$ of the total implied variance~$\Iv^2$ is trivial from Lee's result (Theorem~\ref{thm:Lee})
since~$\qf$ finite implies $\pf=0$ (no negative moment of the stock price exists).
Lee's formulation however does not provide further details.
In the case of a strictly positive mass at the origin, De Marco, Hillairet and Jacquier~\cite[Theorem 3.6]{DeMarco} proved that
$$
\Iv(x) = \sqrt{2|x|} + \mathfrak{c} + \varphi(x), 
\qquad\text{as }x\downarrow -\infty,
$$
where the constant~$\mathfrak{c}$ is related to the mass at zero and the function~$\psi$ tends to zero as~$x$ tends to infinity, 
which, while capturing the slope~$2$, is markedly different from our new formula here.
Before being able to prove the theorem, we need two lemmas providing bounds on prices of Put options 
and on the implied volatility.

\begin{lemma}\label{lem:PutBound}
Let $q \geq 0$ be such that $\EE\left[\left|\log \ts_T\right|^q\right]$ is finite.
Then for all $x <  (q-1)\ind_{\{q<1\}}$, 
$$
\Pbs(x, \Iv(x)) \leq \E^x|x|^{-q} \EE\left[\left|\log \ts_T\right|^q\right].
$$
\end{lemma}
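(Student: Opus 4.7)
The plan is to exploit the fact that the put option price has a built-in cap of $\E^x$ (the strike in normalised units), so that $\Pbs(x,\Iv(x))=\Pft(x)$ behaves essentially like $\E^x$ times the probability of deep out-of-the-money events on $\ts_T$. This tail probability can then be controlled directly by the log-moment hypothesis via a Chebyshev/Markov inequality, with no need to pass through any asymptotic expansion of $\Pbs$.

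Concretely, I would first use the definition of implied volatility to rewrite
$\Pbs(x,\Iv(x)) = \Pft(x) = \EE\bigl[(\E^x-\ts_T)^+\bigr]$, and apply the trivial but sharp bound $(\E^x-\ts_T)^+\leq \E^x\ind_{\{\ts_T<\E^x\}}$, valid since $\ts_T\geq 0$. Next, for $x<0$, the event $\{\ts_T<\E^x\}$ forces $\log\ts_T<x<0$, hence $|\log\ts_T|>|x|$, so that $\ind_{\{\ts_T<\E^x\}} \leq \ind_{\{|\log\ts_T|>|x|\}}$. On this last event $|\log\ts_T|^q/|x|^q\geq 1$ for any $q\geq 0$, which is a pointwise Markov estimate stronger than the usual integrated one. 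Taking expectations then yields
$$
\Pft(x)\;\leq\;\E^x\,\EE\!\left[\frac{|\log\ts_T|^q}{|x|^q}\ind_{\{|\log\ts_T|>|x|\}}\right]
\;\leq\;\E^x|x|^{-q}\,\EE\!\left[|\log\ts_T|^q\right],
$$
which is precisely the claimed bound.

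I do not expect any real obstacle: the argument is a one-line Markov inequality applied to a well-chosen dominating indicator, and uses none of the fine structure of the Black–Scholes formula. The explicit threshold $x<(q-1)\ind_{\{q<1\}}$ is stronger than the $x<0$ required by the argument above, so I would simply record it as a sufficient regime in which the bound holds (presumably it is the regime convenient for the subsequent application in the proof of Theorem~\ref{thm:LMFormula}, e.g.\ to guarantee $|x|>1-q$ when comparing $|x|^{-q}$ to other quantities). If the sharper threshold turns out to be genuinely needed, it should emerge as a book-keeping constraint on $|x|$ rather than from a new analytic idea.
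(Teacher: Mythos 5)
Your proof is correct, and it is genuinely different from (and simpler than) the one in the paper. You reduce the statement to a one-line Markov estimate: from $\Pbs(x,\Iv(x))=\Pft(x)=\EE[(\E^x-\ts_T)^+]$, the cap $(\E^x-\ts_T)^+\leq\E^x\ind_{\{\ts_T<\E^x\}}$, the inclusion $\{\ts_T<\E^x\}\subseteq\{|\log\ts_T|>|x|\}$ for $x<0$, and the pointwise bound $\ind_{\{|\log\ts_T|>|x|\}}\leq|\log\ts_T|^q/|x|^q$ give the result after taking expectations. This holds for all $x<0$, which contains the stated regime $x<(q-1)\ind_{\{q<1\}}$, so nothing is missing.

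The paper's proof instead exploits the strict convexity of $k\mapsto|\log k|^q$ on $(0,\E^{q-1}\ind_{\{q<1\}}+\ind_{\{q\geq1\}})$: it finds the point $v_q(k)$ (expressible via the Lambert~$\Ww_{-1}$ branch) at which a scaled copy of the convex payoff $|\log u|^q$ is tangent to the hockey-stick $(k-u)^+$, and then uses the dominance of the convex function over its tangent construction to obtain first the tighter intermediate inequality~\eqref{eq:lamb_bound} and then, after elementary estimates, the stated bound~\eqref{eq:put_bound}. The restriction $x<(q-1)\ind_{\{q<1\}}$ is precisely what guarantees strict convexity on the relevant strike range, which your argument does not need. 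What the paper's route buys is the sharper bound~\eqref{eq:lamb_bound} and the accompanying remark that the two bounds are asymptotically indistinguishable as $k\downarrow0$; what your route buys is brevity, a wider range of validity in $x$, and independence from the Lambert function machinery. For the downstream uses in Lemma~\ref{lem:IVbound} and Theorem~\ref{thm:LMFormula} only the looser bound~\eqref{eq:put_bound} is ever invoked, so your elementary argument is fully sufficient.
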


\begin{proof}
 The case $q=0$ is a consequence of no-arbitrage bounds for the Put option.  
Now consider $q>0$. 
For ease of exposition only, we work in the moneyness unit $k = \E^x$.  
The map $k\mapsto|\log(k)|^q$ is strictly convex on $\mathcal{K}:=(0, \E^{q-1}\ind_{\{q<1\}} + \ind_{\{q \geq 1\}})$.
Let now $v_q(k)$ denote the solution the equation to 
\begin{equation}\label{eq:v_def}
k = v_q(k)  \left(1 - \frac{1}{q} \log v_q(k)\right), 
\end{equation} 
for $k\in\mathcal{K}$, such that $\lim_{k \downarrow 0}v_q(k)= 0$. 
This equation van be solved explicitly as
$$
v_q(k) = \exp\left\{\Ww_{-1}\left(-q k \E^{-q}\right) + q\right\}.
$$ 
Recall~\cite{Knuth} that the Lambert~$\Ww$ function is such that for $z\in\RR$,  $\Ww(z)$ solves $\Ww(z)\E^{\Ww(z)}=z$, 
which is multi-valued for $-\E^{-1} < z \leq 0$. 
The~$\Ww_{-1}$ branch is the one that satisfies $\lim_{z \uparrow 0}\Ww_{-1}(z) =  -\infty$.  
Then, for $u \in [0,k]$, the curves $k\mapsto\frac{1}{q} v_q(k)|\log v_q(k)|^{1-q}|\log u|^q$ and $k\mapsto(k - u)^+$ 
are equal and have the same gradient at $u = v_q(k)$.  
Since $k\in\mathcal{K}$, strict convexity and positivity of $|\log(u)|^q$ over~$(0,k)$ 
imply on taking expectations that
\begin{equation}\label{eq:lamb_bound}
\Pbs(\log(k), \Iv(\log(k))) \leq \frac{1}{q}
v_q(k)|\log v_q(k)|^{1-q} \EE\left[\left|\log \ts_T\right|^q\right].
\end{equation} 
By construction $0<v_q(k)<k\leq 1$, hence $|\log v_q(k)|^{-q} < |\log(k)|^{-q}$.  
In particular using $\log v_q(k) < 0$ and~\eqref{eq:v_def}, it holds $\frac{1}{q}v_q(k) |\log
v_q(k)| < k$.  Combining these, a larger bound (than in
\eqref{eq:lamb_bound}) for Put prices is
given by
\begin{equation}\label{eq:put_bound}
\Pbs(\log(k), \Iv(\log(k))) \leq k|\log(k)|^{-q} \EE\left[\left|\log \ts_T\right|^q\right].
\end{equation} 
\end{proof}

\begin{remark}
In the limit $x \downarrow-\infty$ (or $k\downarrow 0$), 
we are indifferent between the bounds in
\eqref{eq:put_bound} and~\eqref{eq:lamb_bound},
because
\begin{equation}
\lim_{k \downarrow 0} \frac{\frac{1}{q}v_q(k)|\log v_q(k)|^{1-q}}{k|\log
  k|^{-q}} = 1. \label{eq:same_lims} \end{equation}
To see this, first recall that $\lim_{k \downarrow 0} v_q(k) = 0$, then from
\eqref{eq:v_def},
$$ \lim_{k \downarrow 0} \frac{k}{\frac{1}{q}v_q(k) |\log v_q(k)|} = 1.$$
Further, taking logarithm of both sides of~\eqref{eq:v_def} it follows
that 
$\lim_{k \downarrow 0}\frac{\log(k)}{\log v_q(k)} = 1$, which implies~\eqref{eq:same_lims}.
\end{remark}

\begin{lemma}\label{lem:IVbound}
Let $q\geq 0$ such that $\EE\left[|\log \ts_T|^q\right]$ is finite.
Then for any $p \in [0,q)$, there exists $x_p < 0$ such that 
$$
\Iv(x) < \sqrt{-2x + 2p \log(|x|)} - \sqrt{2p \log(|x|)},
\qquad \text{for all }x < x_p.
$$
\end{lemma}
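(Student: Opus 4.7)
The plan is to exploit the strict monotonicity of the Black--Scholes Put price in its volatility argument. Writing $\sigma_p(x) := \sqrt{-2x + 2p\log|x|} - \sqrt{2p\log|x|}$ for the claimed upper bound, it suffices to show that $\Pbs(x, \sigma_p(x)) > \Pft(x)$ for $x$ below some threshold $x_p$, since $\Pbs(x, \Iv(x)) = \Pft(x)$ by definition and strict monotonicity in $\sigma$ would then force $\Iv(x) < \sigma_p(x)$.

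First, I would compute $\df(x, \sigma_p(x))$ in closed form. Squaring the identity $\sigma_p(x) + \sqrt{2p\log|x|} = \sqrt{-2x + 2p\log|x|}$ yields $\sigma_p(x)^2 + 2\sigma_p(x)\sqrt{2p\log|x|} = -2x$, which, after dividing by $2\sigma_p(x)$, rearranges to $-x/\sigma_p(x) - \sigma_p(x)/2 = \sqrt{2p\log|x|}$. Hence $\df(x, \sigma_p(x)) = \sqrt{2p\log|x|}$ and $\df(x, \sigma_p(x)) + \sigma_p(x) = \sqrt{-2x + 2p\log|x|}$, so from~\eqref{eq:Put},
$$
\Pbs(x, \sigma_p(x)) = \E^{x}\,\Phi\!\left(-\sqrt{2p\log|x|}\right) - \Phi\!\left(-\sqrt{-2x + 2p\log|x|}\right).
$$

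The next step is an asymptotic analysis via Mills' ratio $\Phi(-z) \sim \frac{1}{z\sqrt{2\pi}}\E^{-z^2/2}$ as $z\to\infty$. For $p > 0$, the first term on the right-hand side behaves like $\E^{x}|x|^{-p}/\sqrt{4\pi p\log|x|}$, while the subtracted Gaussian term is of the much smaller order $\E^{x}|x|^{-p}/\sqrt{|x|}$; in the boundary case $p = 0$, the first term equals $\E^{x}/2$ exactly and the subtracted term vanishes super-polynomially. In either situation, there exist $c_p > 0$ and $x_p' < 0$ with $\Pbs(x, \sigma_p(x)) \geq c_p\E^{x}|x|^{-p}(\log|x|)^{-1/2}$ for all $x < x_p'$.

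Finally, I would combine this lower bound with the upper bound $\Pft(x) \leq \E^{x}|x|^{-q}\EE[|\log \ts_T|^q]$ supplied by Lemma~\ref{lem:PutBound}. Since $p < q$, the ratio $|x|^{q-p}(\log|x|)^{-1/2}$ diverges as $x\downarrow -\infty$, so $\Pbs(x, \sigma_p(x)) > \Pft(x)$ for all $x$ below some $x_p$, and the monotonicity argument concludes. I expect the main obstacle to be the Mills-ratio comparison of the two Gaussian terms, in particular controlling the constants uniformly as $p$ approaches the boundary; the degenerate case $p = 0$ is most cleanly handled separately via the exact identity $\Phi(0) = 1/2$.
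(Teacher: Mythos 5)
Your argument is correct and is essentially the paper's own proof with slightly different bookkeeping. You both (i) observe that $\df(x,\sigma_p(x)) = \sqrt{2p\log|x|}$ and $\df(x,\sigma_p(x))+\sigma_p(x) = \sqrt{-2x+2p\log|x|}$, so the Black--Scholes Put at the candidate volatility is $\E^x\Phi(-\sqrt{2p\log|x|}) - \Phi(-\sqrt{-2x+2p\log|x|})$; (ii) invoke Mills' ratio~\eqref{eq:gauss_identity} to show the first term dominates and is of order $\E^x|x|^{-p}(\log|x|)^{-1/2}$; (iii) compare against the Lemma~\ref{lem:PutBound} upper bound $\E^x|x|^{-q}\EE[|\log\ts_T|^q]$, with $p<q$ giving the strict inequality for $x$ small enough; and (iv) close by monotonicity of $\Pbs(x,\cdot)$. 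The paper parametrizes the bound as $\sqrt{2}(f(x)-g(x))$ with $f^2-g^2=-x$ to make step (i) automatic and states the comparison in step (iii) as a single limit equal to $0$, but the content is the same.
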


\begin{proof}
The case $q=0$ is clear from Lemma~\ref{applem:leebound}.
Let $q>0$. 
Note that when the implied volatility is of the form
$$\Iv(x) = \sqrt{2}(f(x) - g(x)),
\qquad\text{for }x\in\RR,$$ 
where $f, g:\RR\to\RR$ satisfy $f(x)^2  - g(x)^2 = -x$, the corresponding price of the Put option~\eqref{eq:Put}
is given by
$$
\Pbs(x,\Iv(x)) = \E^x \Phi\left(- \sqrt{2} g(x)\right) - \Phi\left(- \sqrt{2} f(x) \right).
$$
In our case, the two functions are given by
$f(x) = \sqrt{p \log(|x|)-x}$
and 
$g(x) = \sqrt{p \log(|x|)}$.
With~$\phi$ denoting the Gaussian density, the asymptotic relationship
\begin{equation}\label{eq:gauss_identity}
\lim_{z \uparrow\infty} \frac{z\Phi(-z)}{\phi(z)} = 1,
\end{equation}
holds trivially by L'H\^opital's rule and therefore
$$
\lim_{x \downarrow-\infty} \frac{\Phi \left(- \sqrt{2} f(x)\right)}{\E^x \Phi\left(-\sqrt{2}g(x)\right)} = 0,
$$
which implies
$$\lim_{x \downarrow-\infty} \frac{\Pbs\left(x,\sqrt{2}(f(x) - g(x)) \right)}{\E^x \Phi\left(-\sqrt{2} g(x)\right)} =1.$$
We can then deduce 
\begin{align}
\lim_{x \downarrow-\infty}\frac{\E^x |x|^{-q}}{\Pbs\left(x,\sqrt{2}(f(x) - g(x)) \right)}
 =  \lim_{x \downarrow-\infty}\frac{\E^x |x|^{-q}}{\E^x \Phi\left(-\sqrt{2}g(x)\right)}
& =  \lim_{x \downarrow-\infty}\frac{|x|^{-q}}{\Phi\left(-\sqrt{2}g(x)\right)}\\
& =  \lim_{x \downarrow-\infty}\frac{-\sqrt{2}g(x) |x|^{-q}}{\phi\left(-\sqrt{2}g(x)\right)}\\
& =  \lim_{x \downarrow-\infty}\frac{2\sqrt{\pi}g(x) |x|^{-q}}{\E^{-g(x)^2}}\\
& =  \lim_{x \downarrow-\infty}2\sqrt{\pi}g(x) |x|^{p-q}
 =  \left\{ \begin{array}{ll}
 0, & \textrm{if $p<q$},\\
 \infty, & \textrm{if $p \geq q$},
  \end{array} \right.  
\label{eq:IV_Bound_relation}
\end{align}
and the lemma follows from Lemma~\ref{lem:PutBound} and the monotonicity of $\Pbs(\cdot, \cdot)$ in its second argument.
\end{proof}

Before stating the proof of Theorem~\ref{thm:LMFormula},
recall the following lemma, which will be used repeatedly:
\begin{lemma}\label{lem:Replic}
For any convex function $f:\RR_+ \to \RR$, the identity
$$
f(x) = f(x_0) + f'(x_0)(x - x_0) + \int_0^{x_0}(y - x)^+ \mu(\D y) + \int_{x_0}^{\infty}(x - y)^+ \mu(\D y),
$$
holds for Lebesgue almost all $x,x_0 \in \RR_+$, where $\mu = f''$ in the sense of distributions. 
\end{lemma}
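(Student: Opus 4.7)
The plan is to unpack the identity by integrating the second distributional derivative twice and applying Fubini, being careful about the points where $f$ fails to be differentiable. Since $f:\RR_+\to\RR$ is convex, the right derivative $f'_+(y)$ exists for every $y>0$, is non-decreasing and right-continuous, and the set $\mathcal{N}$ of points where $f'_+\neq f'_-$ is at most countable, hence Lebesgue null. On $\RR_+\setminus\mathcal{N}$ we may unambiguously write $f'=f'_+=f'_-$. Since $f'_+$ is non-decreasing, its distributional derivative $\mu:=f''$ defines a non-negative Radon measure on $(0,\infty)$, and for any $0<a<b$ with $a,b\notin\mathcal{N}$,
\begin{equation}\label{eq:mu-rep}
f'(b)-f'(a)=\mu\bigl((a,b]\bigr).
\end{equation}
I fix $x_0>0$ with $x_0\notin\mathcal{N}$, and prove the identity for every $x>0$ with $x\notin\mathcal{N}$; this exhausts Lebesgue-almost every pair.

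The main step is a double integration. Because $f$ is convex and hence absolutely continuous on compacts in $(0,\infty)$, the fundamental theorem of calculus yields, for $x>x_0$,
\begin{equation}
f(x)-f(x_0)=\int_{x_0}^{x} f'(z)\,\D z
=\int_{x_0}^{x}\!\Bigl(f'(x_0)+\mu\bigl((x_0,z]\bigr)\Bigr)\D z
=f'(x_0)(x-x_0)+\int_{x_0}^{x}\!\int_{x_0}^{x}\ind_{\{x_0<y\le z\}}\,\mu(\D y)\,\D z,
\end{equation}
by \eqref{eq:mu-rep}. Since $\mu$ is a non-negative measure, Tonelli's theorem swaps the integrals, and evaluating the inner integral in $z$ produces $\int_{x_0}^{x}(x-y)\,\mu(\D y)$. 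Because $(x-y)$ is non-negative on the region of integration and vanishes at $y=x$, this equals $\int_{x_0}^{\infty}(x-y)^+\,\mu(\D y)$, while $\int_0^{x_0}(y-x)^+\mu(\D y)=0$ here (as $y\le x_0<x$). The same argument for $0<x<x_0$ gives
\begin{equation}
f(x_0)-f(x)=\int_{x}^{x_0}f'(z)\D z=f'(x_0)(x_0-x)-\int_{x}^{x_0}\!(y-x)\,\mu(\D y),
\end{equation}
again by Tonelli, and the last integral extends to $\int_0^{x_0}(y-x)^+\mu(\D y)$ (with $\int_{x_0}^{\infty}(x-y)^+\mu(\D y)=0$). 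Rearranging in each case gives exactly the stated identity.

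The only subtlety, and therefore the place to be careful, is the interplay between the choice of $f'(x_0)$ (right vs.\ left derivative) and the mass $\mu(\{x_0\})$ at an atom: the formula as written uses the derivative $f'(x_0)$ at a point of differentiability of $f$, which is why the claim is restricted to Lebesgue almost every $x_0$. This, together with discarding the null set $\mathcal{N}\cup\{x\colon x\in\mathcal{N}\}$, accounts for the ``Lebesgue almost all'' qualifier. Everything else is routine: non-negativity of $\mu$ justifies Tonelli, and absolute continuity of $f$ on compacts in $(0,\infty)$ justifies the first passage to $\int f'$.
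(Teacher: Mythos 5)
Your argument is correct. The paper itself offers no proof of this lemma --- it is simply ``recalled'' as the standard Carr--Madan/Breeden--Litzenberger payoff decomposition --- and your route (representing $f'$ off the countable kink set via the Stieltjes measure $\mu$, then a double integration with Tonelli on the non-negative integrand $\ind_{\{x_0<y\le z\}}$, split into the cases $x>x_0$ and $x<x_0$) is precisely the canonical derivation, with the ``Lebesgue almost all'' qualifier correctly traced to discarding the countable set $\mathcal{N}$ where $f'_-\neq f'_+$ (which also kills any atom of $\mu$ at $x_0$ or $x$, so the open/closed endpoint ambiguity in $\int_0^{x_0}$ and $\int_{x_0}^{\infty}$ is immaterial). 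The only cosmetic point: when you substitute $f'(z)=f'(x_0)+\mu\bigl((x_0,z]\bigr)$ inside $\int_{x_0}^{x}f'(z)\,\D z$, this identity holds for $z\notin\mathcal{N}$, i.e.\ Lebesgue-a.e.\ $z$, which is exactly what the integral needs --- worth saying explicitly, but not a gap; integrability is automatic since the resulting $\mu$-integrals are over compact subintervals of $(0,\infty)$ where $\mu$ is finite.
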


\begin{proof}[Proof of Theorem~\ref{thm:LMFormula}]
Let $\zeta:=\liminf_{x \downarrow-\infty}\frac{\df(x, \Iv(x))}{\sqrt{2\log|x|}}$.
Suppose by contradiction that $\zeta < \sqrt{ \qf}$ 
and let~$q$ such that $\sqrt{q} \in (\zeta, \sqrt{\qf})$. 
Then there exists a sequence $(x_n)_{n \in \mathbb{N}}$ with $x_n \downarrow-\infty$ such that for all $n$,
$\df(x_n, \Iv(x_n)) < \sqrt{2q \log |x_n|}$.
Inverting this yields
$$
\Iv(x_n) > \sqrt{-2x_n + 2q\log|x_n|} - \sqrt{2 q \log|x_n|},
$$
which contradicts Lemma~\ref{lem:IVbound} since $q<\qf$.
Assume now that $\zeta > \sqrt{\qf}$ and let~$q$ such that $\sqrt{q} \in (\sqrt{\qf}, \zeta)$.  
We show that this implies $\EE[|\log \ts_T|^p]$ is finite for all $p \in (\qf,q)$.
Indeed, in this case, 
$$
\Iv(x) < \sqrt{-2x + 2q\log(|x|)} - \sqrt{2q \log(|x|)}
$$
for all $x$ along a sequence. 
From~\eqref{eq:IV_Bound_relation} it follows that there exists $x^*<0$ such that for $x<x^*$,
\begin{equation}
\Pbs(x,\Iv(x)) < \E^x|x|^{-q}. \label{eq:conineq} \end{equation}
Now, reverting to moneyness units $k = \E^x$, one sees that for $p\in(\qf, q)$ and 
$z_p = \E^{p-1}\ind_{\{p<1\}} + \ind_{\{p \geq 1\}}$,
\begin{align*}
\EE\left[\left|\log \ts_T\right|^p\right]
 & = \EE \left[\left|\log \ts_T\right|^p\ind_{\{\ts_T < z_p\}}\right] +  |p-1|^p
 + \EE \left[\left|\log \ts_T\right|^p\ind_{\{\ts_T \geq z_p\}}\right] -  |p-1|^p \\
 & = \int_0^{z_p} \frac{\Pbs(\log(k), \Iv(\log(k)))}{k^2}p\left\{ (p-1)|\log(k)|^{p-2} +|\log(k)|^{p-1} \right\} \D k
 + \EE \left[|\log \ts_T|^p\ind_{\{\ts_T \geq  z_p\}}\right] -  |p-1|^p \\
 & \leq \int_0^{z_p}\frac{k}{|\log(k)|^{p}}\frac{p}{k^2}\left\{ (p-1)|\log(k)|^{p-2} +|\log(k)|^{p-1} \right\} \D k
 + \EE \left[\left|\log \ts_T\right|^p\ind_{\{\ts_T \geq z_p\}}\right] -  |p-1|^p < \infty.
\end{align*}
Since $x\mapsto|\log x|^p$ is strictly convex on $(-\infty, z_p)$, 
the second line above follows from Lemma~\ref{lem:Replic} applied to the convex function
$x\mapsto|\log x|^p\ind_{\{x < z_p\}} + |p-1|^p$ and taking expectation with
$x = \ts_T$.  
The third line follows from the strictly positive second derivative of $x\mapsto|\log x|^p$ 
in the interval considered and~\eqref{eq:conineq}.  
The final equality holds since $q>\qf$ and the second expectation on the first line is finite. 
\end{proof}

%%%%%%%%%%%%%%%%%%%%%%%%%%%%%%%%%%%%%%%%%%%%%%%%%%%%%%%%%%
%%%%%%%%%%%%%%%%%%%%%%%%%%%%%%%%%%%%%%%%%%%%%%%%%%%%%%%%%%
\subsection{Refinement of the Fukasawa-Gatheral formula}
In his volatility Bible~\cite{Gatheral}, Gatheral derived an elegant formula expressing the log contract
directly in terms of the implied volatility. 
This has obvious appeal as traders can plug in their favourite implied volatility smile (parametric or not) 
and obtain the fair value of a variance swap. 
Earlier versions of this formula, albeit with more sketchy proofs, were proposed by Matytsin~\cite{Matytsin} and Chriss and Morokoff~\cite{Chriss}.
A fully thorough derivation though has only recently been provided by Fukasawa~\cite{Fukasawa}
(see also~\cite{Lucic} for interesting connectiong with absence of arbitrage)
who not only proved the key ingredient, the decreasing property of the map $k\mapsto\df(k,\dot)$, 
but extended the formula to more general payoff contract.
In all these proofs, the main assumption is the existence of moments $\EE[\ts_T^{1+\eps}]$ for some $\eps>0$.
We show hereafter that this additional condition is in fact not required.
Following~\cite{Fukasawa}, let
\begin{equation}\label{eq:fg}
\ff(x) := -\df(x, \Iv(x)) = \frac{x}{\Iv(x)} + \frac{\Iv(x)}{2},
%\qquad\text{and}\qquad
%\gf(z) := \ff^{\leftarrow}(z).
\end{equation}
and note that, as proved by Fukasawa~\cite{Fukasawa}, the inverse function~$\gf$ is well defined.
This yields the following:
\begin{theorem}\label{thm:GathFuka}
If $\EE\left[|\log(\ts_T)|\right]$ is finite (namely $\qf\geq 1$), then
$$
- 2 \EE[\log(\ts_T)] = \int_{\RR} \Iv(\gf(z))^2\phi(z) \D z.
$$ 
\end{theorem}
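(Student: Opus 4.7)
The plan is to combine a layer-cake decomposition of $-\log \ts_T$ with a Breeden--Litzenberger identity rewritten in implied-volatility coordinates, and then perform Fukasawa's change of variables $z=\ff(x)$.

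I would start from the pathwise identity $-\log \ts_T = \int_0^\infty (\ind_{u<1} - \ind_{u<\ts_T})/u \,\D u$; on taking expectations (Fubini is justified by $\EE[|\log \ts_T|]<\infty$) and substituting $x = \log u$ this becomes
$$
-\EE[\log \ts_T] = \int_\RR \bigl[\QQ[\log \ts_T \leq x] - \ind_{x \geq 0}\bigr] \D x.
$$
Next, Breeden--Litzenberger gives $\QQ[\log \ts_T \leq x] = \E^{-x}\partial_x \Pft(x)$. Differentiating $\Pft(x)=\Pbs(x,\Iv(x))$ by the chain rule, and using $\partial_x \Pbs(x,\sigma) = \E^x \Phi(-\df(x,\sigma))$ together with the Black--Scholes symmetry $\partial_\sigma \Pbs(x,\sigma) = \E^x \phi(\df(x,\sigma))$ and $-\df(x,\Iv(x))=\ff(x)$, this yields at Lebesgue a.e.~$x$
$$
\QQ[\log \ts_T \leq x] = \Phi(\ff(x)) + \phi(\ff(x))\,\Iv'(x),
$$
where absolute continuity of $\Iv$ (inherited from convexity of $\Pft$) makes $\Iv'$ meaningful almost everywhere. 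Substituting yields the split $-\EE[\log \ts_T] = A + B$, with $A := \int_\RR [\Phi(\ff(x)) - \ind_{x \geq 0}]\D x$ and $B := \int_\RR \phi(\ff(x))\Iv'(x)\D x$.

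For $A$, integrating by parts on $(-\infty,0)$ and on $(0,\infty)$ separately (with boundary terms vanishing, justified below) gives $A = -\int_\RR x\phi(\ff(x))\ff'(x)\D x$; the substitution $z=\ff(x)$ then produces $A = -\int_\RR \gf(z)\phi(z)\D z$. For $B$, the same substitution with $J(z):=\Iv(\gf(z))$ yields $\int_\RR \phi(z)J'(z)\D z$, and integration by parts using $\phi'(z)=-z\phi(z)$ gives $B = \int_\RR z\Iv(\gf(z))\phi(z)\D z$. Summing,
$$
-\EE[\log \ts_T] = \int_\RR\bigl[z\,\Iv(\gf(z)) - \gf(z)\bigr]\phi(z)\D z.
$$
The proof closes by rearranging the defining relation $z = \ff(\gf(z)) = \gf(z)/\Iv(\gf(z)) + \Iv(\gf(z))/2$, which gives $z\Iv(\gf(z)) - \gf(z) = \Iv(\gf(z))^2/2$, whence multiplying by~$2$ yields the claim.

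The main obstacle is justifying all the boundary vanishings and integral convergences under the sole assumption $\qf \geq 1$, rather than Fukasawa's stronger $\EE[\ts_T^{1+\eps}]<\infty$. The right endpoint is handled by AM-GM, $\ff(x) \geq \sqrt{2x}$ for $x>0$, together with Lee's bound $\Iv(x) = \Oo(\sqrt{x})$, which produces super-polynomial Gaussian decay and kills $x\Phi(-\ff(x))$ and $|z|\phi(z)J(z)$ as $z\to+\infty$. The left endpoint is the delicate one, and here I would invoke Lemma~\ref{lem:IVbound} with any $p$ strictly below $\qf$ (and slightly above~$1$, using $\qf \geq 1$): this delivers $|\ff(x)| \geq \sqrt{2p\log|x|}$ eventually, hence $|\gf(z)| \leq \E^{z^2/(2p)}$ for $z$ very negative. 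Combined with $\Iv(x)^2 \leq 2|x|(1+o(1))$, this forces $\Iv(\gf(z))^2\phi(z)$ to be integrable and the boundary terms $[\phi(z)J(z)]$ and $[x\Phi(\ff(x))]$ to vanish at the endpoints. It is precisely this quantitative left-wing estimate, unavailable from Lee's framework and extracted here from the log-moment assumption, that allows us to dispense with the right-tail moment hypothesis of~\cite{Fukasawa}.
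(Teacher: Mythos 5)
Your route is correct and reaches the same intermediate formula as the paper by a genuinely different decomposition. The paper starts from the Carr--Madan/Breeden--Litzenberger static replication of the log contract, Lemma~\ref{lem:Replic}, expressing $-\EE[\log\ts_T]$ as $\int_{-\infty}^0 p(\E^x)\E^{-x}\D x + \int_0^\infty c(\E^x)\E^{-x}\D x$, and then integrates by parts once to trade call and put prices for their strike-derivatives; you instead begin from the Fubini/layer-cake identity $-\EE[\log\ts_T] = \int_\RR\bigl[\QQ[\log\ts_T\leq x] - \ind_{x\geq0}\bigr]\D x$ and feed in the CDF directly via $\QQ[\log\ts_T\leq x]=\E^{-x}\partial_x\Pft(x)$. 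These are dual descriptions of the same object; both land on $\Phi(\ff(x)) + \phi(\ff(x))\Iv'(x)$ (which is $\Phi[-\delta(x)] + \phi(\delta(x))\Iv'(x)$ in the paper's $\delta=-\ff$ notation), so the calculations converge from that point on. You then integrate your two pieces $A$ and $B$ by parts separately and substitute $z=\ff(x)$ in each, whereas the paper combines the three summands first and integrates by parts the term in $\Iv'$ to produce $x + \Iv\delta = -\Iv^2/2$ directly from~\eqref{eq:d}; your final step, deriving $z\Iv(\gf(z))-\gf(z)=\Iv(\gf(z))^2/2$ from $z=\ff(\gf(z))$, is the exact mirror image of that algebra. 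Your starting point is arguably cleaner since it avoids separate bookkeeping for calls and puts, while the paper's route makes visible that the boundary cancellations at $x=0$ follow from put--call parity. One caveat worth flagging, which applies equally to the paper's own write-up: your left-boundary estimate invokes Lemma~\ref{lem:IVbound} with some $p$ ``strictly below $\qf$ and slightly above $1$'', and similarly the integrability of $\Iv(\gf(z))^2\phi(z)$ near $z=-\infty$ as you derive it needs $p>1$; this is available only when $\qf>1$, not in the boundary case $\qf=1$. (The paper's assertion that $\qf\geq1$ forces $\delta(x)\geq\sqrt{2\log|x|}$ eventually has the same defect, since Theorem~\ref{thm:LMFormula} only yields a $\liminf$.) In the $\qf=1$ case one should instead use $\EE[|\log\ts_T|]<\infty$ directly: dominated convergence gives $|x|\,\QQ[\log\ts_T\leq x]\to0$, and since $\E^x\Phi(\ff(x))\sim\Pft(x)\leq\E^x\QQ[\log\ts_T\leq x]$ as $x\downarrow-\infty$, the left boundary term $x\Phi(\ff(x))$ vanishes without appealing to the moment-formula rate. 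This patches both your argument and the paper's in the edge case.
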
 

\begin{proof}
Note first that, by~\cite[Theorem 2.8]{Fukasawa} the map $x \mapsto \df(x, \cdot)$ is decreasing.
By~\eqref{eq:Put} and the Put-Call parity, a Call option with log-moneyness $x = \log(K/F_T)$ is worth   $$
\Cbs(x,\sigma) = \Phi[\dfs+ \sigma] - \E^x \Phi[\dfs].
$$
By Lemma~\ref{lem:Replic}, with $c(\E^x) := \Cbs(x,\Iv(x))$ and $p(\E^x) := \Pbs(x,\Iv(x))$, we can write
\begin{align*}
\mathfrak{L} := \EE[-\log(\ts_T)]
&= \int_{-\infty}^0 p(\E^x)\E^{-x} \D x + \int_0^{\infty} c(\E^x)\E^{-x} \D x \\
&= \left[-p\left(\E^x\right)\E^{-x} \right]^0_{-\infty} + \left[-c\left(\E^x\right)\E^{-x}\right]^{\infty}_0
 + \int_{-\infty}^0p'(\E^x)\D x + \int_0^{\infty}c'(\E^x)\D x \\
&= \int_{-\infty}^0p'(\E^x)\D x +  \int_0^{\infty}c'(\E^x)\D x.
\end{align*}
The boundary terms vanish because $c(1) = p(1)$ by Put-Call parity, 
because~$c(\cdot)$ tends to zero for large strikes and by Lemma~\ref{lem:PutBound} 
since $p(x) \leq \E^x|x|^{-1}\EE[|\log\ts_T|]$ for $x<0$ implies
$\lim_{x \downarrow-\infty} p(\E^x)\E^{-x} = 0$.
Now, 
$$p'(\E^x)\E^x = \frac{\D}{\D x}\Pbs(x,\Iv(x))
\qquad \text{and} \qquad
c'(\E^x)\E^x = \frac{\D}{\D x}\Cbs(x,\Iv(x)).$$
Hence, with $\delta(x):=\df(x, \Iv(x))$,
\begin{equation}
\label{eq:deriv}
\begin{split}
p'(\E^x) =& \Phi[-\delta(x)] - \phi(\delta(x))\delta'(x) + \E^{-x}\phi(-\delta(x) - \Iv(x))[\delta'(x) + \Iv'(x)]\\
c'(\E^x)= &  \E^{-x}\phi(\delta(x) +\Iv(x))[\delta'(x) + \Iv'(x)] -\Phi[\delta(x)] - \phi(\delta(x))\delta'(x).
\end{split} \end{equation}
Since the Gaussian density~$\phi$ satisfies $\phi(a+b) = \phi(a-b)\E^{-2ab}$ for any $a,b\in\RR$, then
$$
\E^{-x}\phi(\delta(x) +  \Iv(x)) = \E^{-x}\phi\left( -\frac{x}{\Iv(x)} + \frac{\Iv(x)}{2} \right) = \phi(\delta(x)),
$$
and hence the system~\eqref{eq:deriv} simplifies, by symmetry of~$\phi$, to
$$
p'(\E^x) = \Phi[-\delta(x)]+ \phi(\delta(x))\Iv'(x)
\qquad\text{and}\qquad
c'(\E^x) = \phi(\delta(x))\Iv'(x) -\Phi[\delta(x)].
$$
Therefore
\begin{align*}
  \mathfrak{L} &= \int_{-\infty}^0 \Phi[-\delta(x)] \D x - \int_0^{\infty}
  \Phi[\delta(x)] \D x + \int_{\RR}\phi(\delta(x))\Iv'(x) \D x\\
&= \left[ x \Phi[-\delta(x)] \right]^0_{-\infty} - \left[x \Phi[\delta(x)]
    \right]^{\infty}_0 
+ \int_{\RR}x \phi(\delta(x))\delta'(x) \D  x + \int_{\RR}\phi(\delta(x))\Iv'(x) \D x.
\end{align*}
For the boundary terms, observe first from the log-moment formula, Theorem~\ref{thm:LMFormula}, that $\qf \geq 1$ implies $\delta(x) \geq \sqrt{2\log|x|}$ eventually for $x < 0$, and so
$\exp\left\{-\frac{1}{2}\delta^2(x)\right\} \leq |x|^{-1}$.
Combining with the identity
\eqref{eq:gauss_identity} one sees $\lim_{x \downarrow-\infty} x\Phi[-\delta(x)]=0$.  
Now, Lemma~\cite[Lemma 3.1]{Lee} 
(the right-tail analogue of Lemma~\ref{applem:leebound}), 
implies the trivial bound $\Iv(x) \leq \sqrt{2x}$ for $x>0$ sufficiently large
and therefore 
$$
\delta(x) = -\left(\frac{x}{\Iv(x)} + \frac{\Iv(x)}{2} \right) \leq -\frac{x}{\Iv(x)} \leq -\frac{\sqrt{x}}{2},
$$
which diverges to $-\infty$ as~$x$ tends to infinity. 
Therefore, for~$x$ large enough,
$$
0\leq \frac{x\phi(-\delta(x))}{-\delta(x)}
 = \frac{1}{\sqrt{2\pi}}\frac{x \exp\left\{-\frac{1}{2}\delta^2(x)\right\}}{-\delta(x)}
 \leq \frac{1}{\sqrt{2\pi}}\frac{x  \exp\left\{-\frac{1}{4}x\right\}}{\frac{x}{\Iv(x)}}
 = \frac{\Iv(x) \exp\left\{-\frac{1}{4}x\right\}}{\sqrt{2\pi}}
\leq \frac{\sqrt{x}\exp\left\{-\frac{1}{4}x\right\}}{\sqrt{\pi}},
$$
which tends to zero as~$x$ tends to infinity.
The limit~\eqref{eq:gauss_identity} thus implies
$\lim_{x \uparrow\infty} x\Phi[\delta(x)]=0$ and therefore
\begin{align*}
  \mathfrak{L}
 & = \int_{\RR}x \phi(\delta(x))\delta'(x) \D x + \int_{\RR}\phi(\delta(x))\Iv'(x) \D x\\
 & = \int_{\RR} x \phi(\delta(x))\delta'(x) \D  x 
 + \left[\Iv(x)\phi(\delta(x)) \right]_{\RR}  + \int_{\RR}\phi \left(\delta(x)\right) \delta(x) \delta'(x)\Iv(x) \D x\\
 &  = \int_{\RR} \phi(\delta(x))\delta'(x)[x + \Iv(x)\delta(x)]\D x
= - \int_{\RR} \phi(\delta(x))\delta'(x)\frac{\Iv^2(x)}{2}  \D x,
\end{align*}
where the boundary terms cancel as above and by Lemma~\ref{applem:leebound}, 
and applying~\eqref{eq:d} for $\delta(x)$.
Substituting $z = \delta(x)$, using the symmetry of~$\phi$, 
the proposition follows from the limits $\lim_{x \to \pm \infty}\delta(x) = \mp \infty$.
\end{proof}

%%%%%%%%%%%%%%%%%%%%%%%%%%%%%%%%%%%%%%%%%%%%%%%%%%%%%%%%%%
\subsection{Pricing formulae for European options}\label{sec:FukExtension}
In~\cite{Fukasawa}, Fukasawa not only proved a version of Theorem~\ref{thm:GathFuka} (with more restrictive assumptions), 
but also extended it to options with payoffs of the form
$\Psi(\log(S_T))$ for any twice differentiable function~$\Psi$ with derivative of at most polynomial growth.
More precisely, he derived~\cite[Theorem 4.4]{Fukasawa} an integral form for $\EE\left[\Psi(\log(S_T))\right]$
assuming either that $\EE[S_T^{1+p}]$ exists for some $p>0$ or that $\EE[S_T^{-q}]$ exists for some $q>0$.
The former case is not affected by our setup and we instead provide a refinement of the latter case when no such~$q$ exists
but instead log-moments are available.
This in fact extends the scope of Theorem~\ref{thm:GathFuka} above.
Recall that the function~$\ff$ is defined in~\eqref{eq:fg}
and let $\Pp_{q}$ denote the set of functions with at most polynomial growth of order~$q$ at $-\infty$.

\begin{theorem}
Assume that
$\qf := \sup\left\{q\geq 0: \EE\left[\left|\log \ts_T\right|^q\right] < \infty\right\}$ belongs to $[1, \infty)$.
\begin{itemize}
\item For any twice differentiable function~$\Psi \in \Pp_{q}$ with $q \in [0, \qf]$,
$$
\EE\left[\Psi(\log(S_T))\right]
 = \int_{\RR}\left\{\Psi(\gf(z)) - \Psi'(\gf(z))\left[\gf(z)+\frac{\Iv\left(\gf(z)\right)^2}{2}\right]\right\}\phi(z)\D z
 + \int_{\RR}\Psi''(x)\Iv(x) \phi(\ff(x))\D x.
$$
\item  For any absolutely continuous function~$\Psi \in \Pp_{q}$ with $q \in [0, \qf]$,
$$
\EE\left[\Psi(\log(S_T))\right]
 = \int_{\RR}\left\{\Psi(\gf(z)) - \Psi'(\gf(z)) + \Psi'(\hf(z))\E^{-\hf(z)}\right\}\phi(z)\D z,
$$
where~$\hf$ is the inverse function of the map $x\mapsto \ff(x) - \Iv(x)$.
\end{itemize}
\end{theorem}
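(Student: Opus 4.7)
My plan is to adapt the strategy of Theorem~\ref{thm:GathFuka} from the specific payoff $\Psi(y)=-y$ to a general $\Psi$, using a single engine for both parts: the distributional identity
\begin{equation*}
F_{\log \ts_T}(x) = \Phi(\ff(x)) + \phi(\ff(x))\Iv'(x), \qquad x\in\RR,
\end{equation*}
which I would derive from the Breeden--Litzenberger relation $F_{\ts_T}(\E^x)=\E^{-x}\frac{\D}{\D x}\Pbs(x,\Iv(x))$ together with the Black--Scholes identities $\partial_x \Pbs(x,\sigma) = \E^x \Phi(-\dfs)$ and $\phi(\dfs+\sigma) = \E^x \phi(\dfs)$. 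Writing $\EE[\Psi(\log \ts_T)] = \int_\RR \Psi(x)\,\D F_{\log \ts_T}(x)$ and splitting $\D F_{\log \ts_T}(x) = \phi(\ff(x))\ff'(x)\,\D x + \D[\phi(\ff(x))\Iv'(x)]$, the first piece equals $\int_\RR \Psi(\gf(z))\phi(z)\,\D z$ under the change of variable $z=\ff(x)$, and a single IBP on the second piece produces the common remainder
\begin{equation*}
R := -\int_\RR \Psi'(x)\phi(\ff(x))\Iv'(x)\,\D x.
\end{equation*}
This is the bifurcation point between the two statements.

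For the first statement, since $\Psi\in C^2$, I would perform a second IBP on $R$, using $\phi'(z)=-z\phi(z)$, to obtain
\begin{equation*}
R = \int_\RR \Psi''(x)\Iv(x)\phi(\ff(x))\,\D x \,-\, \int_\RR \Psi'(x)\Iv(x)\ff(x)\phi(\ff(x))\ff'(x)\,\D x.
\end{equation*}
The change of variable $z=\ff(x)$ in the second integral, combined with the algebraic identity $\Iv(x)\ff(x) = x + \Iv(x)^2/2$ (immediate from the definition $\ff(x)=x/\Iv(x)+\Iv(x)/2$), yields $-\int_\RR \Psi'(\gf(z))[\gf(z)+\Iv(\gf(z))^2/2]\phi(z)\,\D z$, completing the first formula.

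For the second statement, no further IBP on $R$ is available since $\Psi$ need not be twice differentiable. Instead, I would decompose $\Iv'(x) = \ff'(x) - [\ff'(x)-\Iv'(x)]$. The $\ff'(x)$-contribution produces $-\int_\RR \Psi'(\gf(z))\phi(z)\,\D z$ under $z=\ff(x)$, while the $[\ff'(x)-\Iv'(x)]$-contribution calls for the change of variable $z=\ff(x)-\Iv(x)$, with inverse $\hf$. The Gaussian identity
\begin{equation*}
\phi(\ff(x)-\Iv(x)) = \E^x\phi(\ff(x)),
\end{equation*}
obtained from $\phi(a-b)=\phi(a)\E^{ab-b^2/2}$ together with the same algebraic identity $\ff(x)\Iv(x)=x+\Iv(x)^2/2$ (and equivalent to the classical $\phi(d_1)=\E^x\phi(d_2)$), converts this contribution exactly into $\int_\RR \Psi'(\hf(z))\E^{-\hf(z)}\phi(z)\,\D z$, completing the second formula.

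The hard part will be justifying the vanishing of boundary contributions from each IBP. At $-\infty$, Theorem~\ref{thm:LMFormula} supplies $\ff(x)^2 \geq 2\qf\log|x|$ eventually, so that $\phi(\ff(x))=\Oo(|x|^{-\qf})$, which dominates the polynomial growth permitted by $\Psi\in\Pp_q$ for $q\le\qf$, and similarly for the auxiliary terms $\Psi'(x)\Iv(x)\phi(\ff(x))$. At $+\infty$, the right-tail analogue of Lemma~\ref{applem:leebound} gives $\Iv(x)\le\sqrt{2x}$ eventually, hence $\ff(x)\geq\sqrt{x/2}$ and $\phi(\ff(x))$ decays exponentially, easily overpowering any growth of $\Psi$ compatible with finiteness of $\EE[\Psi(\log \ts_T)]$. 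Monotonicity of $\ff$ (\cite[Theorem~2.8]{Fukasawa}) ensures that $\gf$ is well-defined; the analogous monotonicity of $\ff-\Iv$, proved by the same technique, yields~$\hf$.
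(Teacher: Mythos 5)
Your proof is correct and follows essentially the same route as the paper, which itself explicitly defers the integral manipulations to Fukasawa's Theorem~4.4 and concentrates only on the two troublesome boundary terms. You have the right skeleton: the distributional identity $F_{\log \ts_T}(x) = \Phi(\ff(x)) + \phi(\ff(x))\Iv'(x)$, the split of $\D F$ into $\D\Phi(\ff)$ (yielding $\int\Psi(\gf(z))\phi(z)\,\D z$ after $z=\ff(x)$) plus $\D[\phi(\ff)\Iv']$, a first IBP producing the common remainder $R$, then a second IBP for the $C^2$ case versus the decomposition $\Iv' = \ff' - (\ff'-\Iv')$ combined with $\phi(\ff-\Iv)=\E^x\phi(\ff)$ for the absolutely continuous case. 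Your two boundary terms $\Psi(x)\phi(\ff(x))\Iv'(x)$ and $\Psi'(x)\Iv(x)\phi(\ff(x))$ are precisely the two limits the paper isolates in~\eqref{eq:IBPBoundaryTerms}.

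Where you are less careful than the paper is in the quantitative bookkeeping at $-\infty$. Writing $\phi(\ff(x))=\Oo(|x|^{-\qf})$ overstates what the Log-Moment formula gives: the $\liminf$ in Theorem~\ref{thm:LMFormula} only yields, for every $\eps>0$, the eventual bound $\ff(x)^2 > 2\qfe\log|x|$ with $\sqrt{\qfe}=\sqrt{\qf}-\eps$, hence $\phi(\ff(x))=\Oo(|x|^{-\qfe})$, and one must also use Fukasawa's inequality $\ff(x)\Iv'(x)<1$ together with the upper bound on $\Iv$ to turn $\Iv'(x)$ into the additional $(2\qfe\log|x|)^{-1/2}$ factor. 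The paper spells out exactly this, which is also how it determines the threshold orders $q\in[0,\qf]$ for $\Psi$ and $q'\in[0,\qf-\tfrac12]$ for $\Psi'$. You should track the $\eps$ explicitly there, and likewise cite (or prove) that $x\mapsto\ff(x)-\Iv(x)=-d_1$ is increasing rather than asserting ``by the same technique.'' With those points tightened, your write-up is a fuller, self-contained account of the calculation the paper refers the reader to Fukasawa for.
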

\begin{remark}
With $\Psi(x) \equiv x$, then $\Psi\in\Pp_{1}$ and $\Psi'\in\Pp_{0}$, proving Theorem~\ref{thm:GathFuka}.
\end{remark}
\begin{proof}
The proof of this theorem follows that of~\cite[Theorem~4.4]{Fukasawa}, or indeed that of Theorem~\ref{thm:GathFuka} above.
The steps are analogous, but one has to pay special attention to the boundary terms arising from the different integrations by parts involved.
In our setting, the two terms that need special care are 
\begin{equation}\label{eq:IBPBoundaryTerms}
\lim_{x\downarrow-\infty}\Psi'(x)\Iv(x)\phi(\ff(x))
\qquad\text{and}\qquad
\lim_{x\downarrow-\infty}\Psi(x)|\Iv'(x)|\phi(\ff(x)),
\end{equation}
which we need to send to zero for a suitable class of functions~$\Psi$.

By Theorem~\ref{thm:LMFormula}, 
$\sqrt{\qf}$ is the largest value such that for any $\eps>0$, there exists $x_\eps$ for which
\begin{equation}\label{eq:IneqSigma}
\frac{\df(x, \Iv(x))}{\sqrt{2 \log|x|}} > \sqrt{\qf} - \eps =:\sqrt{\qfe},
\end{equation}
for all $x\leq x_\eps$.
Now, the equation (in~$\sigma$) $\frac{\df(x, \sigma)}{\sqrt{2 \log|x|}} = \sqrt{\qfe}$ admits two roots
$\sigma_{\pm} =  - \sqrt{2\qfe\log(|x|)} \pm \sqrt{2\qfe\log(|x|) - 2x}$,
so that, for $x<x_\eps$, the inequality~\eqref{eq:IneqSigma} holds if (similarly to Lemma~\ref{lem:IVbound} in fact)
\begin{equation}\label{eq:UpperBound_Iv}
\Iv(x) < - \sqrt{2\qfe\log(|x|)} + \sqrt{2\qfe\log(|x|) - 2x}.
\end{equation}
Note that when $\qf=0$ and replacing the $\liminf$ by a genuine limit, this reads $\Iv(x) < \sqrt{2|x|}$ for~$x$ small enough, 
which was proved by Lee~\cite{Lee}.
This further implies directly that for $x<x_\eps$, 
\begin{equation}\label{eq:UpperBound_f}
\ff(x) < -\sqrt{2\qfe\log(|x|)}.
\end{equation}
Therefore for any function~$\Psi: (-\infty, x_\eps]\to\RR$,
$$
\Psi'(x)\Iv(x)\phi(\ff(x)) = 
\frac{\Psi'(x)\Iv(x)}{\sqrt{2\pi}}\exp\left\{-\frac{\ff(x)^2}{2}\right\} \leq 
\frac{\Psi'(x)\Iv(x)}{\sqrt{2\pi}}\E^{-\qfe\log(|x|)} = 
\frac{\Psi'(x)\Iv(x)}{\sqrt{2\pi}}|x|^{-\qfe}.
$$
From the bound~\eqref{eq:UpperBound_Iv} on~$\Iv(x)$, this expression tends to zero as~$x\downarrow -\infty$ if and only if
$\Psi'\in\Pp_{q'}$ with $q' \in [0, \qf - \frac{1}{2}]$.
Clearly when $\qf \in [0,\frac{1}{2}]$, this cannot tend to zero as~$\Iv(x)$ dominates~$\phi(\ff(x))$.
This refines the analysis of~\cite[Lemma 4.2]{Fukasawa} which assumed the existence of strictly negative moments for the stock price.
Now Fukasawa showed~\cite[Lemma 2.6]{Fukasawa} that, independently of any moment (or log-moment) assumptions, 
$\ff(x)\Iv'(x) <1$ for all $x\in\RR$;
combining this with the new upper bound~\eqref{eq:UpperBound_f}, 
we obtain a new version of~\cite[Theorem~3.6]{Fukasawa}, namely 
$$
\Iv'(x) > -\frac{1}{\sqrt{2\qf\log(|x|)}},
$$
for $x$ small enough, so that 
$|\Iv'(x)| < \left(2\qf\log(|x|)\right)^{-1/2}$ and therefore 
$$
\Psi(x)|\Iv'(x)|\phi(\ff(x)) = 
\frac{\Psi(x)|\Iv'(x)|}{\sqrt{2\pi}}\exp\left\{-\frac{\ff(x)^2}{2}\right\} \leq 
\frac{\Psi(x)|\Iv'(x)|}{\sqrt{2\pi}}\E^{-\qfe\log(|x|)} = 
\frac{\Psi(x)}{2\sqrt{\pi\qf}}\frac{|x|^{-\qfe}}{\sqrt{\log(|x|)}}
$$
converges to zero as $x\downarrow-\infty$ as soon as~$\Psi\in\Pp_q$ with $q \in [0,\qf]$.
This therefore implies that the two limits~\eqref{eq:IBPBoundaryTerms} are equal to zero
if and only if $\Psi\in\Pp_q$  for $q \in [0,\qf]$.
All the other statements in~\cite[Lemma 4.3]{Fukasawa} remain identical, 
and therefore the proof of Theorem~4.4 follows analogously, 
the boundary terms cancelling out under our new assumptions, 
thus proving the first bullet point in the theorem.
Indeed, the two conditions are that
$\Psi\in\Pp_q$  for $q \in [0,\qf]$ and $\Psi' \in \Pp_{q'}$
with $q' \in [0, \qf-\frac{1}{2}]$; 
the intersection of both is in fact the same as the former.
A close look at the proof of the second bullet point in~\cite[Theorem 4.4]{Fukasawa} shows that only the second limit
in~\eqref{eq:IBPBoundaryTerms} needs to tend to zero, which, as just discussed, is true as soon as $\Psi \in \Pp_{\qf}$, and
the theorem follows.
\end{proof}

%%%%%%%%%%%%%%%%%%%%%%%%%%%%%%%%%%%%%%%%%%%%%%%%%%%%%%%%%%
\section{Examples}\label{sec:applications}
Corollary~\ref{corL:IVExpansion} gives us a recipe to estimate~$\qf$ (whenever it exists)
from market data by simple regression of the implied volatility against the log-moneyness.
This also facilitates informed initial guesses for model calibration, 
with a direct relationship between model parameters and the number of log-moments of the stock price admits.
We provide several examples of models where this is feasible.

%%%%%%%%%%%%%%%%%%%%%%%%%%%%%%%%%%%%%%%%%%%%%%%%%%%%%%%%%%
\subsection{Exponential L\'evy models}
In exponential L\'evy models the stock-price process is modelled by
\begin{equation}\label{eq:explevy}
S_t =  S_0\exp(L_t), 
\end{equation}
where $(L_t)_{t \in \TT}$ is a real-valued L\'evy process~\cite[Chapter 3]{Sato},
namely a c\`adl\`ag stochastically continuous process with independent and identically distributed increments
starting from $L_0 = 0$. 
For any $t >0$, the characteristic function of the random variable~$L_t$ satisfies
$$
\log\EE\left[\E^{\I u L_t}\right] = \psi(u)t,
$$
for all $u \in \RR$, where the characteristic exponent~$\psi$ admits the L\'evy-Khintchine representation
\begin{equation}\label{eq:lk}
\psi(u) = -\frac{\xi u^2}{2} + \I \gamma u + \int_{\RR} \left(\E^{\I u x} - 1 - \I ux \ind_{\{|x|\leq 1\}} \right) \nu(\D x),
\end{equation}
with $\xi \geq 0$, $\gamma\in\RR$ and~$\nu$ a measure on~$\RR$ satisfying $\nu(\{0\})=0$ and $\int_{\RR}(1 \wedge x^2) \nu(\D x) < \infty$.
Sato~\cite[Theorem~25.3]{Sato} proved that for any submultiplicative, locally bounded function~$g$,
the expectation $\EE[g(S_T)]$ is finite if and only if $\int_{\RR}g(x)\nu(\D x)$ is finite.
In light of Theorem~\ref{thm:LMFormula}, we thus consider the function $g(x) \equiv \log(|x|)^q$ with $q\geq 0$.

%%%%%%%%%%%%%%%%%%%%%%%%%%%%%%%%%%%%%%%%%%%%%%%%%%%%%%%%%%%%%%%%%
%%%%%%%%%%%%%%%%%%%%%%%%%%%%%%%%%%%%%%%%%%%%%%%%%%%%%%%%%%%%%%%%%
\subsubsection{Finite moment log stable process}
The Finite Moment Log Stable (FMLS) model was introduced by Carr and Wu~\cite{CarrWu} 
to capture the observed negative skew observed on S\&P options.
There the driving L\'evy process~$L$ in~\eqref{eq:explevy} is $\alpha$-stable 
with tail index $\alpha \in (1,2)$ and skew parameter $\beta=-1$,
so that~\cite[Chapter 3]{Sato}, for any $T>0$,
\begin{itemize}
\item $\EE\left[|S_T|^{p}\right]$ is finite for all $p\geq 0$;
\item the support of~$L_T$ is the whole real line;
\item $\EE\left[|\log S_T|^q\right]$ is finite for all $q \in (0, \alpha)$ and is infinite if $q \geq \alpha$.
\end{itemize}
Theorem~\ref{thm:LMFormula} thus applies with $\qf = \alpha \in (0,2)$
and $\EE\left[|\log(S_T)|^2\right]$ is infinite.  
While the model may capture the fat left tail and thin right tail of the stock price, 
it is too extreme if a discrete variance swap is traded. 

%%%%%%%%%%%%%%%%%%%%%%%%%%%%%%%%%%%%%%%%%%%%%%%%%%%%%%%%%%%%%%%%%%%
\subsubsection{Finite moment log mixture model}
In~\eqref{eq:explevy} let $L:=X-Y$ for two independent processes~$X$ and~$Y$ with

\begin{itemize}
\item $\qf_X := \sup\{q\geq 0: \EE\left[|X_1|^q\right]<\infty\}>0$ and $\EE\left[\E^{\pf_X X_1}\right]$ is finite for some $\pf_X\geq 1$;
\item $\qf_Y :=
  \sup\{q\geq0: \EE\left[|Y_1|^q\right]<\infty\}\in (0,\qf_X)$ and $\EE\left[\E^{-\pf_Y Y_1}\right]$ for some $\pf_Y \in[1,\pf_X)$,
\end{itemize}
so that $X$ and $Y$ respectively influence the right and left tails in the distribution. 
Before identifying some candidates for the process $X$ and $Y$, we note:

\begin{lemma}
$\EE\left[\E^{\pf_Y L_1}\right]$ is finite and $\qf_L :=  \sup\{q\geq0: \EE\left[|L_1|^q\right]<\infty\} = \qf_Y$.
\end{lemma}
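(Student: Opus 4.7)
The plan is to exploit the independence of $X$ and $Y$ together with two elementary pointwise inequalities: the exponential comparison $\E^{\pf_Y x}\leq 1 + \E^{\pf_X x}$ valid for all $x\in\RR$ whenever $\pf_Y\in[1,\pf_X]$, and the power comparison $|a-b|^q\leq C_q(|a|^q+|b|^q)$ valid for all $a,b\in\RR$ and $q\geq 0$ (with $C_q=1\vee 2^{q-1}$). With these in hand the two assertions reduce to routine bookkeeping.

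For the exponential moment, by independence I write
$$
\EE\!\left[\E^{\pf_Y L_1}\right]
= \EE\!\left[\E^{\pf_Y X_1}\right]\EE\!\left[\E^{-\pf_Y Y_1}\right].
$$
The second factor is finite by the assumption $\EE[\E^{-\pf_Y Y_1}]<\infty$; the first factor is finite by the pointwise bound and the hypothesis $\EE[\E^{\pf_X X_1}]<\infty$, since $\EE[\E^{\pf_Y X_1}]\leq 1+\EE[\E^{\pf_X X_1}]<\infty$.

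For the identification of $\qf_L$ I prove both inequalities. For the lower bound $\qf_L\geq\qf_Y$, fix any $q\in[0,\qf_Y)$; since $\qf_Y<\qf_X$, both $\EE[|X_1|^q]$ and $\EE[|Y_1|^q]$ are finite by definition of the suprema, and the power inequality gives
$$
\EE\!\left[|L_1|^q\right]\leq C_q\bigl(\EE[|X_1|^q]+\EE[|Y_1|^q]\bigr)<\infty,
$$
so $\qf_L\geq \qf_Y$. For the reverse direction, fix $q\in(\qf_Y,\qf_X)$; by definition of $\qf_Y$ we have $\EE[|Y_1|^q]=\infty$, while the choice $q<\qf_X$ gives $\EE[|X_1|^q]<\infty$. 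Applying the power inequality to $Y_1=X_1-L_1$ yields
$$
\infty=\EE\!\left[|Y_1|^q\right]\leq C_q\bigl(\EE[|X_1|^q]+\EE[|L_1|^q]\bigr),
$$
which forces $\EE[|L_1|^q]=\infty$; letting $q\downarrow\qf_Y$ gives $\qf_L\leq\qf_Y$.

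There is no genuine obstacle beyond being careful that the supremum definitions of $\qf_X$ and $\qf_Y$ transfer finiteness to all strictly smaller exponents via the trivial bound $|x|^q\leq 1+|x|^{q'}$ for $q\leq q'$; with that in place the argument is self-contained and uses only independence and two elementary estimates.
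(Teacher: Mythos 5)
Your proof is correct and follows essentially the same route as the paper: independence of $X$ and $Y$ gives the factorisation of the moment generating function, and the elementary power inequality $|a\pm b|^q\leq C_q(|a|^q+|b|^q)$ is used in both directions. The one stylistic difference is in the upper bound $\qf_L\leq\qf_Y$: you apply the power inequality directly to $Y_1=X_1-L_1$, while the paper reaches the same bound $|Y_1|^q\leq 2^q(|L_1|^q+|X_1|^q)$ by a slightly longer chain through $(|Y_1|-|X_1|)^+$ and the reverse triangle inequality; your version is a bit more direct but the content is identical.
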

\begin{proof}
The first statement follows by independence of~$X$ and~$Y$, 
so that the moment generating function of~$L$ is simply the product of those of~$X$ and~$Y$.
Now, it is clear that $\EE|L_1|^q$ is finite for $q<\qf_Y$. For
$q>\qf_Y$, observe
$$
|Y_1|^q \leq \left|\left(|Y_1| - |X_1|\right)^{+} + |X_1| \right|^q < 2^q \left(\left\{\left(|Y_1| - |X_1|\right)^{+}\right\}^q + |X_1|^q \right)
$$
and $(|Y_1| - |X_1|)^+ \leq \left| |Y_1| - |X_1| \right| \leq |Y_1 - X_1|$, where this last inequality is due to the reverse triangular
inequality. This implies the assertion about~$\qf_L$. 
\end{proof}

Choices for~$X$ abund, as any process with finite moments and finite exponential moments of all orders will do, 
in particular the Brownian motion, the generalised Inverse Gaussian process,
the generalised Hyperbolic process~\cite{BNS}, the CGMY process~\cite{CGMY}.
For~$Y$, the choices are scarcer, but the inverse Gaussian process is a valid one, 
whereby $Y$ is a pure-jump L\'evy process with density at time~$1$ equal to
$$
f_{\mathrm{IG}}(y; \alpha, \beta) =
\frac{\beta^{\alpha}}{\Gamma(\alpha)}y^{-\alpha-1}\E^{-\beta/y},\quad
\textrm{for $y>0$},
$$
where $\alpha, \beta>0$ are the shape and scale parameters and $\Gamma(\cdot)$ is the Gamma function.
J{\o}rgensen~\cite{Jorgensen} showed that 
$$
\EE\left[Y^r\right] = 
\frac{\Gamma(\alpha - r)}{\Gamma(\alpha)}\beta^r, \quad \text{if }r<\alpha,
\text{ and infinite otherwise}.
$$ 
The reciprocal Gamma distribution is a special case of the Generalised Inverse Gaussian $(\mathrm{GIG})$ distribution 
and hence is infinitely divisible~\cite{BNS}.  
With this specification, the log-returns have exploding negative moments beyond order $\qf_L=\alpha$
(possibly larger than~$2$) and positive moments of  arbitrary order depending on~$X$.

%%%%%%%%%%%%%%%%%%%%%%%%%%%%%%%%%%%%%%%%%%%%%%%%%%
%%%%%%%%%%%%%%%%%%%%%%%%%%%%%%%%%%%%%%%%%%%%%%%%%%
\subsection{Stochastic volatility models}
The final example we are interested in belongs to the class of classical stochastic volatility models, 
where~$S$ satisfies the following dynamics under the risk-neutral probability measure:
\begin{equation*}
\begin{array}{rl}
\D S_t & = \sigma_t^{\delta} S_t \left(\rho\,\D W_t + \sqrt{1-\rho^2}\D W_t^{\perp}\right),\\
\D \sigma_t & = b(\sigma_t)\D t + \nu \sigma_t^\gamma \D W_t,
%\D \langle W, B\rangle_t & = \rho\, \D t,
\end{array}
\end{equation*}
starting from $S_0, \sigma_0>0$, where $\rho \in [-1,1]$, $\delta, \gamma, \nu>0$ and $b(\cdot)$ is some drift.
Lions and Musiela~\cite{Lions} provided necessary (and often sufficient) conditions 
on the parameters and the drift ensuring that~$S$ is a true martingale and that moments of a certain order exist.
A particularly interesting case was recently highlighted by Carr and Willems~\cite{CarrWill} with the specifications
$\delta = \gamma = 1$
and 
%\qquad\text{and}\qquad
$b(\sigma) = (R_0+R_1\sigma)(R_2-\sigma)$,
with $R_0, R_1\geq 0$ and $R_2>0$.
Using~\cite{Lions}, they showed that for any $\rho \in [-1,0]$,
Roger Lee's largest negative moment is actually equal to $\pf=0$.
We leave it to future endeavours to compute the precise value of~$\qf$.

%%%%%%%%%%%%%%%%%%%%%%%%%%%%%%%%%%%%%%%%%%%%%%%

\end{document}